\documentclass[conference]{IEEEtran}

\usepackage{bm}
\usepackage{enumerate}
\usepackage{threeparttable}
\usepackage{graphicx}
\usepackage{tabularx} 
\usepackage{stfloats}
\usepackage{times}
\usepackage{float}
\usepackage{caption}
\usepackage{mathtools}
\usepackage{subcaption}
\usepackage{amsmath}
\usepackage{wrapfig}
\usepackage{lscape}
\usepackage{pgfgantt}
\usepackage{multirow}
\usepackage{xspace}
\usepackage{calc}
\usepackage{fancyhdr}
\usepackage[shortlabels]{enumitem}
\usepackage{varioref}
\usepackage{tikz}
\usepackage{booktabs}
\usepackage{algorithm}
\usepackage{algcompatible}
\usepackage[noend]{algpseudocode}
\usepackage{soul}
\usepackage{placeins}
\usepackage[normalem]{ulem}
\usepackage{thmtools}
\usepackage{url}
\usepackage{amsfonts}
\usepackage{amsthm}
\usepackage{mathrsfs}
\usepackage{calligra}
\usepackage{pifont}

\ifCLASSINFOpdf
\else
\fi
\newcommand{\sgn}{\ensuremath {\texttt{SGN} {\xspace}}}
\newcommand{\fec}{\ensuremath {\texttt{FEC} {\xspace}}}
\newcommand{\enc}{\ensuremath {\texttt{Enc} {\xspace}}}
\newcommand{\dec}{\ensuremath {\texttt{Dec} {\xspace}}}
\newcommand{\hmac}{\ensuremath {\texttt{HMAC} {\xspace}}}
\newcommand{\mayo}{\ensuremath {\texttt{MAYO} {\xspace}}}

\newcommand{\cert}{\ensuremath {\mathit{cert}}{\xspace}}

\newtheorem{mylemma}{Lemma}{\bfseries}{\rmfamily}
{\bfseries}{\rmfamily}
{\bfseries}{\rmfamily}
{\bfseries}{\rmfamily}
\newtheorem{definition}{Definition}[section]

\newcommand{\emulsion}{\ensuremath {\texttt{EMULSION}}}

\newcommand{\auth}{\ensuremath {\texttt{Authenticate} {\xspace}}}
\newcommand{\broadcast}{\ensuremath {\texttt{Broadcast} {\xspace}}}
\newcommand{\setup}{\ensuremath {\texttt{Setup} {\xspace}}}

\newcommand{\sign}{\ensuremath {\texttt{Sign} {\xspace}}}
\newcommand{\verify}{\ensuremath {\texttt{Verify} {\xspace}}}

\newcommand{\keygen}{\ensuremath {\texttt{KeyGen} {\xspace}}}

\newcommand{\amf}{\ensuremath {\mathit{AMF} {\xspace}}}

\newcommand{\bs}{\ensuremath {\mathit{BS} {\xspace}}}
\newcommand{\ckg}{\ensuremath {\mathit{CKG} {\xspace}}}
\newcommand{\ue}{\ensuremath {\mathit{UE} {\xspace}}}

\newcommand{\id}{\ensuremath {\mathit{ID}}{\xspace}}

\newcommand{\sk}{\ensuremath {\mathit{sk}}{\xspace}}

\newcommand{\pk}{\ensuremath {\mathit{PK}}{\xspace}}

\newcommand{\PC}[1]{
	\vspace{2px}
	\noindent{\bf \IfEndWith{#1}{:}{#1}{#1:}}
}

\newtheorem{theorem}{Theorem}

\begin{document}
%
\title{A Lightweight Post-Quantum Authentication Framework for 5G Base Station Bootstrapping}

\author{ 
\IEEEauthorblockN{Saleh Darzi\IEEEauthorrefmark{1},
                   Mirza Masfiqur Rahman\IEEEauthorrefmark{2}, 
                   Imtiaz Karim\IEEEauthorrefmark{3}, 
                   Rouzbeh Behnia\IEEEauthorrefmark{4}, 
                   Attila A Yavuz\IEEEauthorrefmark{1}, 
                   and Elisa Bertino\IEEEauthorrefmark{2}}
\thanks{\IEEEauthorrefmark{1}Saleh Darzi and Attila A Yavuz are with the Bellini College of AI, Cybersecurity and Computing, University of South Florida; Email: salehdarzi@usf.edu; attilaayavuz@usf.edu}
\thanks{\IEEEauthorrefmark{2}Mirza Masfiqur Rahman and Elisa Bertino are with the Department of Computer Science, Purdue University, Email:rahman75@purdue.edu; bertino@purdue.edu}
\thanks{\IEEEauthorrefmark{3}Imtiaz Karim is with the Department of Computer Science, University of Texas at Dallas, Email:imtiaz.karim@utdallas.edu}
\thanks{\IEEEauthorrefmark{4}Rouzbeh Behnia is with the School of Information Systems at University of South Florida, Email:behnia@usf.edu}
}

\IEEEoverridecommandlockouts
\makeatletter\def\@IEEEpubidpullup{6.5\baselineskip}\makeatother
\IEEEpubid{\parbox{\columnwidth}{
		Network and Distributed System Security (NDSS) Symposium 2027\\
		22--26 March 2027, Seoul, Republic of Korea\\
		ISBN 978-1-970672-09-1\\  
		https://dx.doi.org/10.14722/ndss.2027.[23$|$24]xxxx\\
		www.ndss-symposium.org
}
\hspace{\columnsep}\makebox[\columnwidth]{}}

\maketitle

\begin{abstract}
The absence of authenticated bootstrapping between User Equipments (UEs) and Base Stations (BSs) in 5G leaves System Information Block (SIB) broadcasts unprotected, enabling fake BS attacks, man-in-the-middle interception, and spoofed emergency alerts. Prior efforts such as Public Key Infrastructure (PKI)-based certificate chains, token-based schemes, and identity-based signatures either impose overhead exceeding 5G's strict packet-size constraints or lack post-quantum (PQ) security. Direct NIST-PQC integration is also infeasible; current standards, such as ML-DSA requires 34 fragmented SIB1 packets and up to 5,282\,ms end-to-end delay, and FN-DSA still requires 13 fragments and up to 1,920\,ms. We propose EMULSION, a symmetric chained publicly verifiable authentication framework for 5G/6G BS broadcast authentication. EMULSION is the first framework to exploit native 5G architectural features: fixed SIB transmission windows, millisecond-level time synchronization, and eSIM/USIM credential management to achieve genuine PQ security at symmetric-key efficiency. It uses a timed stream loss-tolerant authentication with HMAC chain anchored by a compact PQ signature~(MAYO) applied once per epoch, fitting authentication within a single packet with no fragmentation and eliminating certificate transmission entirely. Unlike prior schemes, EMULSION extends to the full SIB family (SIB1-SIB21) at no additional per-broadcast cost, demonstrated over the air on SIB1. Evaluated on a real over-the-air 5G testbed, EMULSION achieves 33x lower end-to-end delay and 31x less communication overhead than ML-DSA, and 12x lower delay and 5.4x less overhead than FN-DSA. We formally prove the security of EMULSION and open-source its implementation for public testing and adaptation.
\end{abstract}


%
\IEEEpeerreviewmaketitle

\section{Introduction} \label{sec:introduction} 

Fifth-generation (5G) cellular networks have become the backbone of modern communication infrastructure, with over $3.5$ billion subscribers worldwide and projections exceeding $6$ billion by 2030~\cite{ericsson2025}. Before a User Equipment (UE) can place a call, send a message, or initiate any service, it must first acquire System Information Blocks (SIBs), i.e., broadcast messages that carry essential parameters for cell access, resource allocation, and mobility. SIB1 in particular is the bootstrapping message: it is broadcast every 160~ms on the Downlink Shared Channel (DL-SCH) and must be consumed before any authenticated channel (e.g., RRC, NAS) exists~\cite{3gpp38331}. This unauthenticated bootstrap creates a critical window of vulnerability. A Fake Base Station (FBS) can forge or replay SIB1 to mount a family of downstream attacks, including denial-of-service, location tracking, bidding-down, and man-in-the-middle; all rooted in the absence of broadcast-layer authentication~\cite{lteinspector,park2022doltest,mubasshir2025gotta,sigover}.

Beyond these threats, the long-term security of 5G authentication is fundamentally at risk from quantum-capable adversaries. The classical public-key primitives supporting virtually all deployed cellular security (e.g., Elliptic Curve Cryptography (ECC)) are rendered insecure by Shor's algorithm on a sufficiently large quantum computer~\cite{mitchell2020impact}. Unlike TLS~1.3 or PQ-WireGuard, where large Post-Quantum Cryptography (PQC) keys and certificates can be accommodated in a flexible handshake, SIB messages are constrained broadcast frames with fixed, sub-kilobyte size limits, no interactive exchange, and strict timing windows, making a direct PQC integration far from trivial~\cite{darzi2025future}. Migration guidelines from NIST, ETSI, IEEE, and NSA/CISA jointly emphasize that the PQ transition must begin immediately~\cite{nistpqc2024,etsi2024,nsacisa2025}, yet no existing 5G bootstrapping mechanism satisfies genuine PQ security under the strict protocol constraints.  
Securing the 5G SIB broadcast channel against both classical FBS attacks and future quantum-capable adversaries is thus a pressing open problem.

\noindent\textbf{Prior Research.}
Although prior works have explored SIB authentication and PQ migration for cellular networks, each exhibits at least one of the following limitations:
\textbf{(A)} Existing approaches for SIB1 authentication rely on traditional signature schemes that do not provide PQ security~\cite{ross2024fixing, hussain2019insecure}. 
\textbf{(B)}~Direct substitution of classical signatures with NIST-PQC schemes (ML-DSA, FN-DSA) results in signature and public key sizes that vastly exceed the 372-byte size of a single SIB1 transport block, requiring extensive fragmentation across 12--34 packets and incurring multi-second end-to-end delays~\cite{darzi2025future}.
\textbf{(C)}~PQ-AKA proposals~\cite{damir2022beyond,braeken2025pqaka} target the mutual authentication phase after initial cell access and are orthogonal to the broadcast authentication problem; they do not protect the unauthenticated SIB window.
\textbf{(D)}~Hybrid PQ-KEM approaches~\cite{ko2025aka_hpqc} aim to reduce key exchange overhead but are not designed for the one-to-many broadcast setting of SIBs, where the base station has no per-UE state.

Thus, none of these approaches simultaneously achieve PQ security, single-packet compactness, and operation without any pre-existing authenticated channel---the three properties required for practical 5G broadcast authentication.

\smallskip
\noindent\textbf{Problem.}
With the above limitations in mind, in this paper we aim to answer the following research question: \emph{Is it possible to design a post-quantum secure authentication scheme for 5G SIB broadcasts that doesn't incur heavy fragmentation, demands no pre-existing authenticated channel, and operates within the real-time constraints of the broadcast periodicity?}

\looseness-1 \noindent\textbf{A promising Direction.}
Unlike generic broadcast authentication scenarios, the 5G base station architecture possesses several unique structural features that, when carefully exploited, make efficient PQ broadcast authentication feasible. First, SIBs are transmitted on fixed, periodic windows (e.g., SIB1 every 160~ms), defining natural authentication epochs. Second, modern UEs are equipped with GNSS receivers that provide wall-clock time independently of the cellular base station, and 5G gNBs are themselves GNSS-synchronized for network timing~\cite{3gppnr38104}; this gives both endpoints a shared, sub-microsecond time reference that no FBS adversary can manipulate through forged broadcast messages alone.
Third, the eSIM/USIM ecosystem already supports secure remote provisioning of long-term cryptographic material into tamper-resistant hardware~\cite{3gppesim,gsmasgp}. These three features 
can circumvent the fundamental barriers that have stalled prior approaches. 

\looseness-1 \noindent\textbf{Challenges.}
Realizing this direction requires overcoming three concrete technical challenges.
\emph{(C1)~PQ signature overhead.} Even the most compact PQ signature candidates (e.g., MAYO-2 at 180~B) produce signatures that, when combined with the full certificate chain required for bootstrapping trust, far exceed the 372-byte SIB1 packet size. Naively attaching a PQ signature to each SIB broadcast is therefore infeasible.
\emph{(C2)~Bootstrap without an authenticated channel.} Classical TESLA requires an authenticated delivery of the initial key commitment $K_0$. In the SIB authentication setting, no such channel exists: the UE has not yet completed RRC setup or NAS registration. Any solution must be self-bootstrapping from the first broadcast packet, relying only on material that can be pre-provisioned before the UE encounters the cell.
\emph{(C3)~Real-time and loss-tolerance constraints.} The authentication mechanism must complete within the 160~ms SIB1 periodicity to avoid delaying cell access, must tolerate packet loss inherent in wireless broadcast (especially at cell edges), and must not impose significant computational burden on resource-constrained UEs---all while providing PQ security guarantees.

\noindent\textbf{Our Approach.}
In this paper, we introduce $\emulsion$, a post-quantum broadcast authentication scheme for 5G that exploits the architectural features identified above. $\emulsion$ employs a TESLA-style one-way key chain built on HMAC-SHA256 to authenticate SIBs within each 160~ms broadcast epoch using only symmetric primitives. The asymmetric cost is concentrated in a single epoch-opening packet that carries a compact PQ signature (MAYO-2~\cite{beullensmayo}), anchoring the HMAC chain's root of trust. The long-lived MAYO public key $PK_{\amf}$ is pre-provisioned into the UE's eSIM/USIM, eliminating all certificate chains from the over-the-air path. This design makes the AMF (Access and Mobility Management Function) the sole entity performing asymmetric operations as the core key generator; the BS requires no asymmetric cryptography at runtime.

\looseness-1 \noindent\textbf{Findings.}
We implement $\emulsion$ and a comprehensive set of baselines---NIST-PQC standards (ML-DSA, FN-DSA, SLH-DSA) and classical EC-Schnorr---on a full over-the-air 5G testbed (srsRAN, Open5GS, USRP B210). Our evaluation yields the following key results: (1)~$\emulsion$ achieves an end-to-end authentication delay of \textbf{160.07~ms} versus 1,920~ms for FN-DSA and 5,282~ms for ML-DSA; (2)~it requires only \textbf{324 bytes} of cryptographic overhead, fitting within a single SIB1 transport block with zero fragmentation; (3)~at 10\% packet loss, $\emulsion$'s expected retransmission overhead is \textbf{$\sim$41~bytes} versus $\sim$1,971~bytes for FN-DSA ($\sim$48$\times$ reduction); and (4)~demonstrated on SIB1, $\emulsion$ extends to the complete SIB family (SIB1 through SIB21) at no additional per-broadcast cost.

\subsection{Our Contributions} \label{subsec:contributions}

Our key contributions in this paper are as follows:

\noindent\textbf{Key Observations and Main Idea.} 
Unlike generic broadcast authentication scenarios, the 5G base station authentication protocol possesses unique architectural features that, when exploited, can enable efficient authentication mechanisms:

\begin{enumerate}[leftmargin=*]
\item[(i)] {\em Fixed SIB transmission intervals.} 
SIBs are broadcast on fixed periodic windows, e.g., SIB1 every 160~ms with repeated transmissions as short as 20~ms~\cite{3gpp38331}. 
In 6G, even tighter intervals are anticipated to support sub-millisecond latency targets~\cite{imt2030}. These fixed windows define natural, predictable authentication epochs that a time-aware scheme can exploit without any protocol modification. Importantly, the same periodic broadcast structure applies to all SIB types: SIB2 through SIB9 are also transmitted on DL-SCH within transport blocks of similar size, either periodically broadcast or delivered on-demand via RRC signaling~\cite{3gpp38331}. An authentication mechanism that operates at the transport block level, therefore generalizes across the entire SIB family without per-type adaptation.
 
\item[(ii)] {\em Widely available time synchronization.}
TESLA-style protocols require loose time synchronization between sender and receiver. Modern UEs are equipped with GNSS receivers that provide wall-clock time with sub-microsecond accuracy, independent of the cellular base station's broadcast channel. This independence is critical: the System Frame Number (SFN) carried in the MIB on PBCH is unauthenticated and thus susceptible to manipulation by an FBS adversary, whereas GNSS-derived time originates from an external satellite constellation outside the adversary's control. Since GNSS receivers are standard in smartphones and increasingly prevalent in IoT UEs, and since 5G gNBs are themselves GNSS-synchronized for network timing~\cite{3gppnr38104}, both endpoints share a common, fine-grained time reference that is sufficient to enforce TESLA's safe-packet test without relying on any unauthenticated over-the-air timing signal. 
 
\item[(iii)] {\em eSIM/USIM credential management.} The eSIM infrastructure already supports secure remote provisioning of long-term cryptographic material into tamper-resistant USIMs~\cite{3gppesim,gsmasgp}, the same storage used for Authentication and Key Agreement (AKA) provisioning. This provides a natural anchor for long-lived root key material, removing the need to transmit any certificate or asymmetric public key during SIB authentication. 
\end{enumerate}

By exploiting these three features, we propose $\emulsion$, which harnesses a TESLA-style variant--\emph{Timed Efficient Stream Loss-tolerant Authentication}~\cite{perrig2000efficient}--relying
solely on efficient symmetric primitives (i.e., HMAC) within each fixed SIB transmission window, made viable by 5G's existing precision time synchronization. The HMAC chain is anchored by a compact, PQ digital signature (e.g., MAYO~\cite{beullensmayo}, advanced to Round 3 of NIST's additional signature process) applied once at epoch boundaries rather than per-SIB, providing a root of trust without per-broadcast asymmetric overhead. The eSIM/USIM infrastructure securely provisions this root key, eliminating certificate chains entirely. Table~\ref{tab:ComparisonIntro}
summarizes the performance of $\emulsion$ against
representative baselines; Section~\S\ref{sec:PerformanceEvaluation}
provides evaluation details. The desirable
properties of $\emulsion$ are as follows:\vspace{-1mm}

\begin{table}[h!]
    \centering
    \caption{Comparison of authentication schemes for 5G SIB1 bootstrapping.} 
    \resizebox{0.48\textwidth}{!}{%
    \begin{tabular}{|@{}c@{}||@{}c@{}|@{}c@{}|@{}c@{}|@{}c@{}|@{}c@{}|@{}c@{}|}
         \hline \multirow{2}{*}{\textbf{Scheme}} & \textbf{E2E} & \textbf{Crypto.} & \textbf{Total} & \textbf{PQ} & \textbf{Pkt. Loss} & \textbf{No Frag.} \\
         & \textbf{(ms)} & \textbf{(B)} & \textbf{OTA (B)} & \textbf{Secure} & \textbf{Resilient} & \textbf{Needed} \\ \hline
         \textbf{$\textit{EC\mbox{-}Schnorr}$}~\cite{schnorr1991efficient} & $4.15$ & $256$ & $372$ & \ding{55} & \ding{51} & \ding{51} \\ \hline\hline
         \textbf{$\textit{FN\mbox{-}DSA}$}~\cite{fouque2018falcon} & $1920.67$ & $3792$ & $4836$ & \ding{51} & \ding{55} & \ding{55} \\ \hline
         
         \textbf{$\textit{ML\mbox{-}DSA}$}~\cite{dang2024module} & $5282.47$ & $9884$ & $12648$ & \ding{51} & \ding{55} & \ding{55} \\ \hline\hline
         \textbf{$\emulsion$} ($|\mathcal{C}|{=}2$) & $\mathbf{160.07}$ & $\mathbf{324}$ & $\mathbf{744}$ & \ding{51} & \ding{51} & \ding{51} \\ \hline
    \end{tabular}}
    \begin{minipage}{\linewidth}
    \small \vspace{+1mm}
    \textbf{Note:} E2E delay in milliseconds; overhead in bytes. $\emulsion$ with $PK_{MAYO}$ pre-provisioned in eSIM. PQ counterparts are with 2-level certificates. See Section~\ref{sec:PerformanceEvaluation} for details.
\end{minipage}
    \vspace{-3mm}
    \label{tab:ComparisonIntro}
\end{table}

\begin{itemize}[leftmargin=*]  
\item[-] \looseness-1 \textbf{Symmetric-Optimal Computation and Post-Quantum Efficiency.} Per-SIB authentication in $\emulsion$ is driven by HMAC, delivering PQ security and near-optimal computational efficiency simultaneously. The PQ anchor signature (MAYO) is computed only at epoch boundaries, amortizing its cost across all SIBs within the window. $\emulsion$ is $\mathbf{\sim215\times}$ faster than FN-DSA at the BS and $\mathbf{\sim5\times}$ faster at the UE, with a total end-to-end delay of $\mathbf{160.07}$~ms versus $5{,}282.47$~ms for
ML-DSA. This translates directly to lower network delay and energy savings for resource-constrained UEs.
 
\item[-] \textbf{High Robustness and Security.} (\textit{a})~{\em Packet loss resilience} compact per-SIB tags yield significantly higher packet loss tolerance than fragmented PQC schemes, with expected retransmission overhead of only $\sim41$ bytes at $10$\% loss versus $\sim1971$ bytes for FN-DSA. Beyond tag compactness, TESLA's one-way key chain ensures a UE can authenticate any buffered packet upon receiving its disclosed key, with no retransmission required regardless of intermediate losses.  
(\textit{b})~{\em Error-Correction Compactness}: The redundancy size in forward error correction grows linear with the message size. Since \emulsion~has smaller tag sizes, it achieves the same level of error correction capability with a lesser redundancy compared to asymmetric alternatives (e.g., $\sim$14 versus $\sim$480 redundancy bytes at 10\% loss for $\emulsion$ versus FN-DSA under Reed-Solomon coding).  (\textit{c})~{\em Full SIB coverage}: demonstrated on SIB1, $\emulsion$ extends to the complete SIB family at no additional fragmentation or per-broadcast cost, closing attack surfaces on SIB2–SIB21 that prior schemes leave open.

\item[-] \textbf{Comprehensive Analysis and Open-Source Evaluation.} We implemented $\emulsion$ and also deployed a full set of baselines--NIST-PQC standards~(ML-DSA, FN-DSA, SLH-DSA) and classical schemes~(EC-Schnorr)--on an over-the-air 5G testbed~(srsRAN, Open5GS), covering cryptographic overhead, communication size, end-to-end delay, time synchronization accuracy, and packet loss behavior. 
The source code and testbed implementation are publicly released at: \textcolor{blue}{\url{https://github.com/Prometheused/EMULSION}}
\end{itemize}
\section{Background} \label{sec:preliminaries}

$\lambda$ denotes the security parameter (chain key length, $\lambda = 256$). $t$ denotes the truncated MAC tag length ($t = 128$). $q_v$ denotes the number of verification attempts available to an adversary within a disclosure window.

\noindent \textbf{Notations}: The symbol $||$ denotes concatenation. $x \xleftarrow{\$} S$ denotes uniform random sampling from set $S$. $\lambda$ denotes the security parameter. $F$ and $F'$ denote pseudorandom functions (PRFs) used in TESLA key-chain derivation and MAC-key derivation, respectively. $H$ denotes a collision-resistant hash function (instantiated with SHA-256). $\lfloor \cdot \rfloor$ denotes the floor function. A \emph{time interval} of index $i$ spans the half-open window $[T_0 + i \cdot T_{\mathrm{int}},\; T_0 + (i+1)\cdot T_{\mathrm{int}})$, where $T_0$ is the epoch start time and $T_{\mathrm{int}}$ is the fixed interval duration. $\Delta t$ denotes the maximum clock-synchronization error between the
sender (BS) and the receiver (UE).

\subsection{5G Cellular Network Architecture}\label{subsec:5G}
 
\subsubsection{Network Entities}
A 5G cellular network comprises three principal entities~\cite{fourati2021comprehensive}:
 
\begin{itemize}[leftmargin=*]
    \item \looseness-1 \textit{\textbf{5G Core Network (5GC).}} The 5GC orchestrates service delivery, session management, policy enforcement, and subscriber authentication. Most relevant to this work is the Access and Mobility Management Function (AMF), which terminates Non-Access Stratum (NAS) signaling from UEs and manages registration, connection, and mobility procedures.

 
    \item \looseness-1 \textit{\textbf{User Equipment (UE).}} A UE is any subscriber device at the network edge (e.g., smartphone, IoT sensor, or vehicle), provisioned with a Universal Subscriber Identity Module (USIM) storing a permanent identifier and the cryptographic credentials required for mutual authentication with the core network.

 
    \item \textit{\textbf{Radio Access Network (RAN).}} The RAN consists of base stations (gNBs, denoted BS) and associated UEs, governing radio resource allocation, wireless transmission, and initial access. Critically, the system information messages a BS periodically broadcasts are transmitted in the clear, neither encrypted nor signed, rendering them susceptible to forgery, replay, and manipulation.

\end{itemize}

\subsubsection{Protocol Stack and Initial Access} 
Fig.~\ref{fig:timing} depicts the 5G initial access sequence and protocol layers. At the BS, the topmost control-plane layer is Radio Resource Control (RRC); at the UE and AMF, the NAS layer is stacked above RRC. A master public key $\pk_{\id_0}$ is securely embedded in the USIM and is publicly verifiable; private keys for the AMF and individual BSs are derived from the master secret key $\sk_{\id_0}$ and distributed through authenticated out-of-band channels.  
During initial access, the BS broadcasts System Information (SI), an RRC-layer message announcing cell-specific configuration parameters, comprising the Master Information Block (MIB) and one or more System Information Blocks (SIBs). The MIB is carried on the Physical Broadcast Channel (PBCH) and supplies the scheduling and decoding parameters needed to locate SIB1, which is transmitted on the Downlink Shared Channel (DL-SCH) with a configurable periodicity of 160\,ms and a maximum size of 372\,bytes per 3GPP specifications~\cite{RRCSpec}.


\vspace{-4.5mm}
\begin{figure}[ht!]
	\centering
	\includegraphics[scale=0.42, trim = 5cm 3cm 5cm 0cm, clip, angle=270]{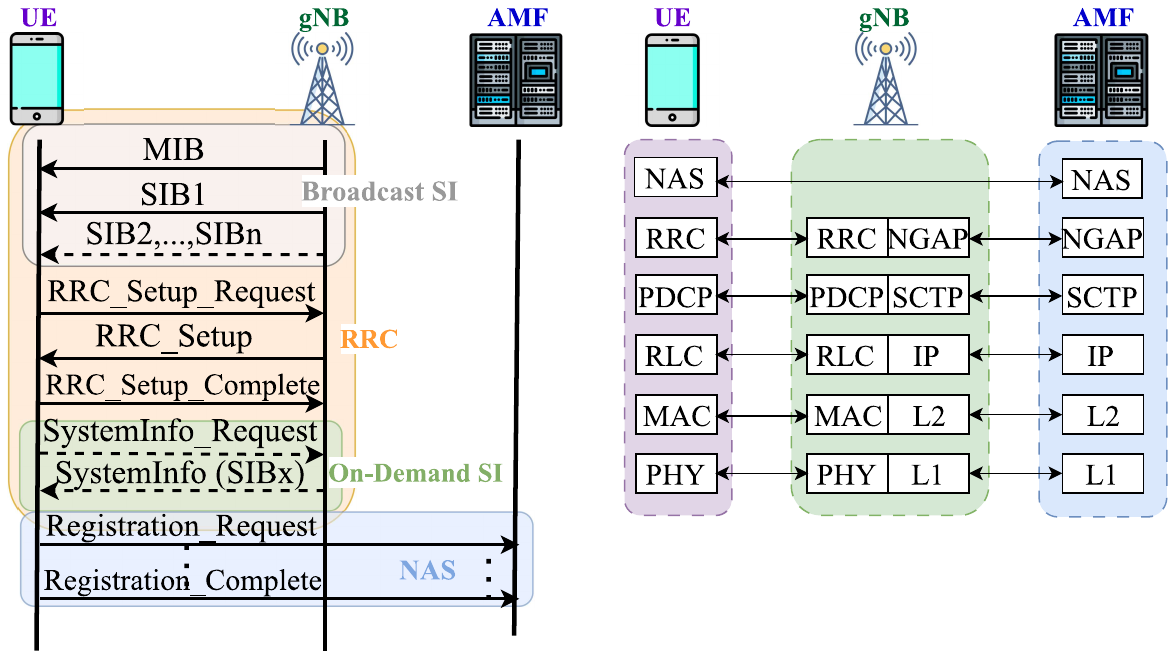}\vspace{-2mm}
	\caption{\small 5G network connection setup and protocol stack.}\vspace{-3mm}
	\label{fig:timing}
\end{figure}

\looseness-1 The UE decodes the MIB from the PBCH to obtain the System Frame Number (SFN), which provides the timing reference needed to locate SIB1 on the DL-SCH. SIB1 carries the core cell configuration including PLMN identity, cell barring status, and scheduling for additional SIBs; mandatory fields include Cell Selection Info (signal quality metrics) and Cell Access Related Info (PLMN identifiers and cell access status), with optional fields such as IMS-Emergency Support included when applicable~\cite{RRCSpec}. Upon decoding SIB1, the UE initiates the RACH procedure and RRC connection setup, after which NAS registration with the AMF occurs. This ordering is security-critical: \emph{SIB1 is consumed before any authenticated channel exists}, since RRC security activation and NAS authentication both occur strictly after the UE has already acted on the received system information. Any SIB authentication mechanism must therefore be self-contained within the broadcast itself, without relying on prior security context.

\noindent\textbf{Additional SIB messages.}
Beyond SIB1, the 3GPP specification defines SIB2 through SIB21, each transmitted over DL-SCH in periodic scheduling windows and serving a distinct purpose. For instance, SIB3 carries intra-frequency neighbor cell lists and reselection parameters, SIB4 provides the inter-frequency equivalents, SIB9 delivers GPS and UTC timing information, and SIB15 conveys disaster-roaming configurations. While the authentication framework developed in this paper targets SIB1, the underlying approach generalizes naturally to other SIB types.


\subsection{Building Blocks}
\label{subsec:cryptoprimitives}

\noindent\textbf{TESLA Broadcast Authentication:}
TESLA (Timed Efficient Stream Loss-tolerant Authentication)~\cite{perrig2000efficient} achieves sender authentication over broadcast channels using only symmetric primitives, at the cost of loose time synchronization. Its core mechanism combines a \emph{one-way key chain} with
\emph{timed key disclosure}: the sender commits to the chain at setup and reveals each key only after a publicly known delay, so that receivers can verify authenticity without shared secrets and without any shared secret or asymmetric operations on every packet.

The chain is built using a one-way function
$F\colon\{0,1\}^\lambda \to \{0,1\}^\lambda$ satisfying \emph{(i) one-wayness}: given $K_i = F(K_{i+1})$, recovering
$K_{i+1}$ is infeasible; and \emph{(ii) target collision resistance (TCR)}: finding $K' \neq K$ with $F(K') = F(K)$ for a given $K$ is infeasible~\cite{perrig2000efficient}.
Notably, $F$ need not be a keyed PRF. The sender samples $K_N \xleftarrow{\$} \{0,1\}^\lambda$ and derives $K_i \leftarrow F(K_{i+1})$ for $i = N{-}1, \ldots, 0$, yielding the \emph{public anchor} $K_0 = F^N(K_N)$. One-wayness ensures a receiver holding $K_i$ cannot recover any $K_j$, $j > i$, while forward iteration trivially yields any
$K_j$, $j < i$. For MAC key derivation, a domain-separated function $F'\colon\{0,1\}^\lambda \to \{0,1\}^\lambda$ produces $K'_i = F'(K_i)$ per interval, preventing structural links between chain and MAC keys. We instantiate both $F$ and the per-packet MAC with \textit{HMAC-SHA-256}~\cite{bellare1996keying}, satisfying one-wayness and TCR under the PRF assumption and aligning with deployed 5G cryptographic stacks~\cite{SecuritySpec}. 

TESLA's security guarantee is that no adversary can forge a MAC tag on any packet that the receiver accepts as authentic, provided the security condition holds (the receiver verifies it received the packet before the corresponding key was disclosed) and $F$ is one-way~\cite{perrig2000efficient}. TESLA tolerates arbitrary packet loss: any subsequent disclosed key $K_j$ allows the receiver to verify all earlier buffered packets from intervals $\leq j$ by forward-iterating $F$, regardless of
intermediate drops. The original TESLA paper~\cite{perrig2000efficient} introduces several variants.


\noindent\textbf{Digital Signature Scheme:}
A digital signature scheme enables a signer to produce an
unforgeable authentication tag on a message that any party holding the public key can verify. 

\begin{definition}
\label{def:sgn}
A Digital Signature Scheme $\sgn$ is a triple of PPT algorithms $(\keygen, \sign, \verify)$: 
\begin{itemize}[leftmargin=*]
    \item[-] $\underline{(\sk, \pk) \leftarrow \sgn.\keygen(1^\lambda)}$: Given security parameter $\lambda$, outputs a secret signing key $\sk$ and a public key $\pk$.
    \item[-] $\underline{\sigma \leftarrow \sgn.\sign(\sk, m)}$: Given $\sk$ and $m \in \{0,1\}^*$, outputs a signature $\sigma$.
    \item[-] $\underline{b \leftarrow \sgn.\verify(\pk, m, \sigma)}$: Returns $1$ (accept) if $\sigma$ is a valid signature on $m$ under $\pk$, and $0$ (reject) otherwise.
\end{itemize}
\end{definition}

Classical signature schemes based on integer factorization or discrete logarithms are broken by Shor's algorithm~\cite{shor1994algorithms}, motivating the adoption of PQ alternatives. NIST concluded its primary PQ standardization process, selecting three signature standards: \textit{ML-DSA} (CRYSTALS-Dilithium)~\cite{dang2024module}, a lattice-based scheme based on Module-LWE and Module-SIS; \textit{SLH-DSA} (SPHINCS+)~\cite{cooper2024stateless}, a stateless hash-based scheme; and \textit{FN-DSA} (Falcon)~\cite{soni2021falcon}, a compact lattice-based scheme over NTRU rings. To encourage diversity in hard problems, NIST also launched an additional signature competition~\cite{nist-addl-sigs} for schemes not based on structured lattices, with first-round selections spanning multivariate, code-based, and symmetric-based candidates.


\noindent\textbf{MAYO as an instantiation of $\sgn$:} Among the third-round candidates of NIST's additional signature
competition~\cite{nist-addl-sigs}, \textit{MAYO}~\cite{beullensmayo} is a leading multivariate scheme based on the Oil-and-Vinegar~(OV) framework, achieving EUF-CMA security under the hardness of the Multivariate Quadratic~(MQ) problem. Its \emph{whipping} technique
yields compact signatures~(186\,B at NIST Level~I) and a manageable public key~(4{,}912\,B), well-suited for bandwidth-constrained 5G broadcasts. MAYO also supports batch verification of $k$ pairs $\{(m_j,\sigma_j)\}_{j=1}^k$ under the same $\pk$ at a cost well below $k$ individual verifications. 

\vspace{-1.5mm}
\section{Approach Overview} 
\label{sec:threatandsecmodel}

\vspace{-1.5mm}
\subsection{Threat Model}
\label{subsec:threatmodel} \vspace{-1mm}
We model the adversary $\mathcal{A}$ as a Quantum Polynomial-Time (QPT) entity with full control over the wireless broadcast channel. Concretely, $\mathcal{A}$ can eavesdrop on all downlink transmissions from any gNB, and may inject, modify, replay, or selectively drop packets at will. $\mathcal{A}$ may also impersonate legitimate BSs
to deceive UEs before any cryptographic protections are established. Under this model, $\mathcal{A}$ mounts three classes of attacks, illustrated in Fig.~\ref{fig:adv_models}:

\begin{itemize}[leftmargin=*]
\item \textbf{Fake Base Station (FBS) Attacks.}
The adversary deploys a rogue base station that spoofs the identity of a legitimate gNB, luring victim UEs into establishing a connection. Once attached, the adversary can launch cascading attacks that exploit vulnerabilities in subsequent protocol stages, including tracking, denial of service, and downgrade attacks~\cite{mubasshir2025gotta}. In \ref{subsec:fbs_characterization}, we further list the attacks that fall under FBS.
\item \textbf{Man-in-the-Middle (MitM) Attacks.}
A MitM adversary positions itself between a UE and a legitimate gNB, intercepting and potentially altering unprotected signaling traffic. This attack is feasible whenever broadcast messages lack cryptographic authentication, such as digital signatures~\cite{rupprecht2019breaking}. 
\item \textbf{Quantum-Capable Adversaries.}
Our threat model further accounts for adversaries that may eventually gain access to quantum computing resources capable of breaking conventional signature schemes~\cite{darzi2023envisioning}.
\end{itemize}

\begin{figure}[t]
\centering
\begin{subfigure}{.32\columnwidth}
  \centering
  \includegraphics[width=0.9\linewidth, trim = 1cm 24cm 10cm 0cm, clip]{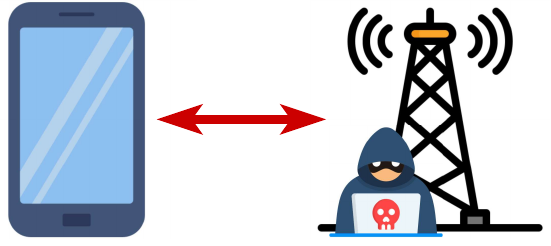}
  \caption{\scriptsize Fake Base Stations}
  \label{fig:sub1}
\end{subfigure}%
\hfill
\begin{subfigure}{.32\columnwidth}
  \centering
  \includegraphics[width=0.9\linewidth, trim = 1cm 24cm 10cm 0cm, clip]{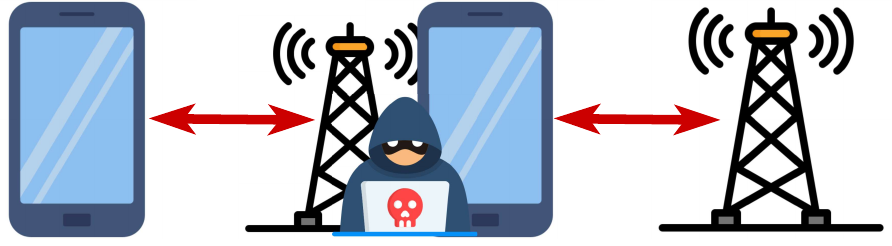}
  \caption{\scriptsize MiTM Attacker}
  \label{fig:sub3}
\end{subfigure}%
\hfill
\begin{subfigure}{.32\columnwidth}
  \centering
  \includegraphics[width=0.9\linewidth, trim = 0cm 24cm 10cm 0cm, clip]{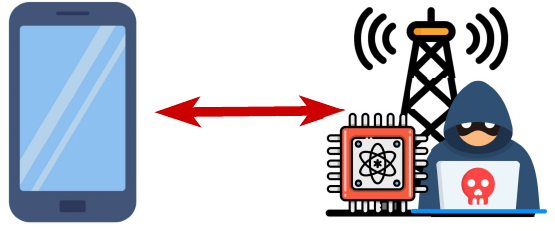}
  \caption{\scriptsize Quantum Adversary}
  \label{fig:sub4}
\end{subfigure}
\caption{Outline of Our Threat Models.}\vspace{-6mm}
\label{fig:adv_models}
\end{figure}

\vspace{-3mm}
\subsection{Security Model} \label{subsec:securitymodel} \vspace{-1mm}
Our security analysis relies on the following definitions that correspond to the security properties of our building blocks.  

\vspace{-1mm}
\begin{definition}[EUF-CMA Security of $\sgn$]
\label{def:sgn-eufcma}
A digital signature scheme $\sgn = (\keygen, \sign, \verify)$ is existentially unforgeable under adaptive chosen-message attack (EUF-CMA) if for all PPT adversaries $\mathcal{A}$, given $\pk$ and adaptive access to a signing oracle $\mathcal{O}_S$, the probability of outputting a valid signature $(m^*, \sigma^*)$ on a fresh $m^*$ not queried to $\mathcal{O}_S$ is negligible: $\mathsf{Adv}^{\mathsf{EUF\text{-}CMA}}_{\sgn,\mathcal{A}}(\lambda) \leq \mathsf{negl}(\lambda)$.
\end{definition}\vspace{-1.5mm}

We instantiate $\sgn$ with MAYO~\cite{beullensmayo}, whose EUF-CMA security reduces to the hardness of the MQ problem, believed to be quantum-resistant. The security proof is generic in $\sgn$ and holds for any EUF-CMA-secure instantiation.

\vspace{-1mm}
\begin{definition}[Security of $F$ and $F'$]
\label{def:chain-security}
Let $F : \{0,1\}^\lambda \rightarrow \{0,1\}^\lambda$ denote the one-way chain function and $F' : \{0,1\}^\lambda \rightarrow \{0,1\}^\lambda$ the MAC-key derivation function, both instantiated with HMAC-SHA-256~\cite{perrig2000efficient}. We require these properties:  
\begin{itemize}[leftmargin=*]
    \item[-] \textit{(i) Pseudorandomness.} $F$ and $F'$ are pseudorandom functions (PRFs): no efficient distinguisher can tell $F(K, \cdot)$ or $F'(K, \cdot)$ apart from a random function with non-negligible advantage, where $K \xleftarrow{\$} \{0,1\}^\lambda$. HMAC-SHA-256 satisfies this under standard assumptions. With $\lambda = 256$~bits chain keys, quantum key recovery via Grover's algorithm requires $2^{128}$ sequential oracle invocations, meeting NIST Level I. 
    \item[-] \textit{(ii) One-Wayness.} Given any chain element $K_i$, where $K_0 = F^\ell(K_\ell)$ and $K_i = F^{\ell-i}(K_\ell)$, no efficient adversary can recover $K_j$ for any $j > i$: $\Pr[\mathcal{A}(K_i) = K_{i+1}] \leq \mathsf{negl}(\lambda)$.
    \item[-] \textit{(iii) Target Collision Resistance (TCR).} Given $K$, no efficient adversary can find $K' \neq K$ such that $F(K') = F(K)$: $\Pr[\mathcal{A}(K) = K' : F(K') = F(K)] \leq
\mathsf{negl}(\lambda)$.
\end{itemize}
\end{definition}


\begin{definition}[$\emulsion$ Security]
\label{def:emulsion-eufcma}
Consider experiment $\mathsf{Exp}^{\mathsf{EUF\text{-}CMA}}_{\emulsion, \mathcal{A}}$: the challenger runs $\emulsion.\setup$ to produce $(\sk_\amf, \pk_\amf)$, chain ${C = \{K_0, \ldots, K_\ell\}}$, and ${\cert_{K_0} \leftarrow \sgn.\sign(\sk_\amf, \mathsf{info})}$ where $\mathsf{info} = K_0 \| \id_\bs \| T_0 \| T_{\mathsf{int}} \| d \| \ell$, giving $\pk_\amf$ to $\mathcal{A}$ (modeling eSIM pre-provisioning). $\mathcal{A}$ may adaptively query a broadcast oracle $\mathcal{O}_B(i, m)$ receiving honest broadcast packets $\Pi_i \leftarrow \emulsion.\broadcast(C, \cert_{K_0}, \mathsf{info}, i, m)$, modeling unrestricted observation of SIB broadcasts. $\mathcal{A}$ wins by outputting $(m^*, i^*, \Pi^*)$ such that $\emulsion.\auth(\pk_\amf, \Pi^*, t_R, \Delta t) = 1$, $(m^*, i^*)$ was never queried to $\mathcal{O}_B$, and $K_{i^*-d}$ has not yet been publicly disclosed. $\emulsion$ is secure if $\mathsf{Adv}^{\mathsf{EUF\text{-}CMA}}_{\emulsion, \mathcal{A}}(\lambda) \leq \mathsf{negl}(\lambda)$ for all QPT $\mathcal{A}$. 
\end{definition}

\subsection{FBS Attack Characterization}
\label{subsec:fbs_characterization}

An FBS is not itself an attack but a \emph{capability} that
enables a set of downstream exploits. Every FBS-driven
compromise follows the same structure: the rogue cell
broadcasts unverifiable SIBs, the \ue{} camps on it, and the
adversary exploits pre-security-context signaling.
Mubasshir~\emph{et al.}~\cite{mubasshir2025gotta} enumerate
$21$ distinct multi-step attacks, all conditioned on this
unauthenticated broadcast, spanning denial of service,
protocol downgrade, identity exposure, device fingerprinting,
battery exhaustion, and public-warning misinformation
(Table~\ref{tab:fbs-taxonomy}). Notably, many attacks depend
on SIB2--SIB8, not SIB1 alone: reselection priorities steer
victims onto rogue cells, and public-warning payloads reside in
SIB6--SIB8. Schemes that protect only SIB1 leave these
vectors intact.

Detection-based defenses~\cite{mubasshir2025gotta} achieve
$96\%$ FBS detection accuracy but are inherently
\emph{post hoc}: the \ue{} must attach before a trace can be
classified. $\emulsion$ solves this preemptively: an unverifiable
transport block is rejected before cell selection, so
the \ue{} never camps on the rogue cell. Parameter cloning
confers no advantage, since it reproduces content but not a
valid authentication tag. The rightmost column of
Table~\ref{tab:fbs-taxonomy} shows the resulting coverage.
Injection-delivered variants (overshadowing, paging hijacking)
are reduced to detected denial of service rather than
prevention; while paging-message authentication remains
orthogonal~\cite{RRCSpec}.

\begin{table}[t]
\centering
\scriptsize
\setlength{\tabcolsep}{2.5pt}
\caption{\small FBS-enabled multi-step attacks~\cite{mubasshir2025gotta}.}
\label{tab:fbs-taxonomy}
\resizebox{\columnwidth}{!}{%
\begin{tabular}{|l|l|l|c|}
\hline
\textbf{Category} & \textbf{Attacks} & \textbf{Dep.} & \textbf{$\emulsion$} \\
\hline\hline
Persistent DoS & Reject, kickoff, counter desync & SIB1 & \ding{51} \\
\hline
Downgrade & RRC redir., cap.\ hijack & SIB2--5 & \ding{51} \\
\hline
Identity exposure & IMSI/SUCI catch, auth relay & SIB1 & \ding{51} \\
\hline
Fingerprinting & MNmap, cap.\ hijack & SIB1 & \ding{51} \\
\hline
Energy drain & Lullaby, energy depletion & SIB1 & \ding{51} \\
\hline
Misinformation & Panic attack & SIB6--8 & \ding{51} \\
\hline
Network abuse & X2/Xn flood, HO hijack & SIB1 & \ding{51} \\
\hline
Injection & Overshadow, paging hijack & PCCH & \ding{51}$^\dagger$ \\
\hline
\end{tabular}}
\begin{minipage}{\linewidth}
    \small \vspace{+1mm}
    \textbf{Note:} \ding{51}~= prevented (forged SIB rejected before cell selection); \ding{51}$^\dagger$~= detected but not prevented. Coverage for SIB2–SIB21 follows from the per-SIB-type chain construction (\ref{subsec:mainoperations}); measurements in section~\ref{sec:PerformanceEvaluation} are for SIB1.
\end{minipage} \vspace{-5mm}
\end{table}

\subsection{Research Challenges and Our Solutions} \label{subsec:challenges}
We arrived at $\emulsion$ by following a chain of design constraints, where each resolution exposes the next obstacle. We articulate these challenges and the design decisions they force, which together motivate the $\emulsion$ framework. The technical realization of each is detailed in Section~\ref{sec:proposedscheme}.

\begin{itemize}[leftmargin=*]
    \item \textbf{C1: Asymmetric PQ signatures cannot fit the broadcast constraint.} SIB broadcasts are fixed, sub-kilobyte, non-interactive frames transmitted on tight periodic windows (e.g., SIB1 is $372$\,B every $20$--$160$\,ms). Since the 5G security stack must adopt NIST-standardized primitives for long-term deployment, any certificate-based scheme that attaches a per-message NIST-PQC signature or certificate chain exceeds the frame budget, forcing fragmentation across many packets ($13$--$34$ packets for NIST-PQC standards) and end-to-end latencies up to $5{,}282$\,ms, which is incompatible with 5G bootstrapping. As our performance evaluation confirms (Section~\ref{sec:PerformanceEvaluation}), this overhead is inherent to per-message asymmetric transmission and cannot be tuned away.

    \item[-] \textbf{Solution.} We remove the asymmetric cost from the per-message path. Each broadcast is authenticated with a symmetric primitive (HMAC), while a single NIST-selected round three PQ signature (MAYO~\cite{beullensmayo}) is computed once per epoch and amortized across all broadcasts within that epoch. This reduces per-SIB overhead to a single compact tag that fits within one packet with no fragmentation, while retaining NIST-grade PQ security at the root of trust. 

    \item \textbf{C2: Symmetric authentication normally presumes a shared secret unavailable at bootstrapping.} Adopting a symmetric approach to satisfy the constraint above requires the sender and receiver to share a key. However, a UE authenticating a BS before any security context is established shares no secret with it, and the broadcast is inherently one-to-many. Conventional symmetric authentication therefore cannot offer public verifiability, and symmetric efficiency and public verifiability appear mutually exclusive.

    \item[-] \textbf{Solution.} We employ delayed-disclosure one-way key chaining (a TESLA-style variant~\cite{perrig2000efficient}), which achieves sender authentication over broadcast channels using only symmetric primitives and no shared secret, by committing to a one-way chain and revealing keys after a bounded delay. This yields the central property that most symmetric schemes lack: public verifiability at symmetric-key efficiency. $\emulsion$ thus attaining the efficiency of a symmetric construction without restriction to pre-shared trust, a gain we substantiate through the remaining challenges.



    \item \textbf{C3: Delayed key disclosure requires a trusted timing and anchor reference at initial access.} The TESLA security condition requires the receiver to bound the sender's transmission time and to trust the chain anchor. At initial access, the UE holds neither an authenticated clock nor a pre-shared anchor, and the timing signals it observes (e.g., SFN on the PBCH) are supplied by the still-unverified BS. A design that assumes these away is vulnerable to replay, where an adversary rebroadcasts a previously valid epoch, its disclosed keys, and its SIB messages to a fresh UE.
    
    \item[-] \looseness-1 \textbf{Solution.} We exploit two trust anchors independent of the unverified base station. For timing, the UE's GNSS receiver provides wall-clock time with sub-microsecond accuracy from a satellite constellation outside the FBS adversary's control, supplying the loose synchronization the TESLA safe-packet test requires~\cite{songtfs2026}. For the chain root, the eSIM/USIM infrastructure~\cite{3gppesim,gsmasgp}, which uses the same tamper-resistant storage as AKA credential provisioning, provisions the root public key offline, eliminating over-the-air certificate transmission. Because both references originate outside the radio channel, epoch freshness is bound to state that the attacker cannot forge, closing the replay surface (Section~\ref{subsubsec:timesync}).


\end{itemize}

\subsection{Scope} \label{subsec:scope} 
$\emulsion$ targets SIB broadcast authentication during the initial bootstrapping phase. It is orthogonal to and does not replace 5G-AKA, which provides UE–5GC mutual authentication and session key establishment. Our model does not address side-channel attacks, physical key extraction, relay attacks, jamming, overshadowing, or passive eavesdropping, each of which requires independent, orthogonal defenses. 


Our scope excludes authentication and key agreement procedures (e.g., 5G-AKA~\cite{rossi2024enhancing, damir2022beyond}) and their post-quantum variants~\cite{pqaka2025tifs, pqaka2022wisec}. These protocols operate at the NAS layer and provide mutual UE--5GC authentication together with session key establishment, i.e., functions that are orthogonal to broadcast bootstrapping authentication. $\emulsion$ targets the pre-NAS broadcast plane: it ensures that the UE receives a genuine SIB1 from a verified BS before any AKA exchange is initiated, thereby preventing FBS driven bootstrapping from propagating into the key agreement phase. Concretely, an operator can enforce this separation through a policy gate that permits AKA initiation only upon receipt of a fresh, $\emulsion$-verified SIB1; this linkage is complementary and preserves the cryptographic structure of AKA.

Overall, we do not address vulnerabilities that lie outside the broadcast authentication plane. In particular, the following are out of scope: (i) \emph{physical-layer attacks} such as, radio-frequency jamming, signal overshadowing, and downlink eavesdropping, which require orthogonal defenses such as anti-jamming techniques or physical-layer encryption; (ii) \emph{hardware attacks} such as physical key extraction from BS hardware; (iii) \emph{UE-to-BS privacy}, such as IMSI/SUPI catching, which is mitigated by NAS-layer SUCI concealment mechanisms~\cite{RRCSpec}; and (iv) \emph{denial-of-service attacks} at the MAC or RLC sublayers (e.g., resource exhaustion via crafted RACH preambles), which target availability rather than authenticity, (v) GNSS spoofing is explicitly excluded from our threat model, consistent with prior work~\cite{singla2021look,songtfs2026}; co-located GNSS spoofing substantially raises the adversary's cost and complexity beyond FBS deployment alone.

While these attacks lie outside our scope, their effect against $\emulsion$ is bounded to denial rather than deception, since none can cause a UE to accept a forged SIB: forgery requires a valid per-packet tag, and thus a chain key, or a valid anchor signature, and thus $sk_{AMF}$---neither of which is available to the adversary. Under overshadowing, a forged SIB lacks a valid tag and fails verification, so the UE aborts before RACH and the attack yields a detected denial of service rather than a fraudulent attachment. Jamming and physical-layer interference prevent delivery entirely, causing bootstrapping to fail as with any broadcast scheme. Relay of unmodified genuine SIBs and downgrade via omission
during incremental deployment are discussed in Appendix~\ref{app:relaydowngrade}. Across all these cases, $\emulsion$ reduces the adversary's capability from substituting system information to merely denying it.
\vspace{-1mm}
\section{The Proposed $\emulsion$ Framework}
\label{sec:proposedscheme}
\vspace{-1mm}
\subsection{Design Rationale and Comparative Justification}
\label{subsec:design}\vspace{-1mm}
\textbf{Infeasibility of Direct NIST-PQC Integration for 5G Authentication.} Direct application of NIST-PQC signatures to SIB broadcasts is fundamentally precluded by 5G's packet-size constraints. Even the most compact standard, FN-DSA (Falcon-512), requires $\approx$6 SIB1 fragments and $\approx$800 ms end-to-end delay in optimized configuration; ML-DSA and SLH-DSA are substantially worse~\cite{darzi2025future}. 



\looseness-1 \textbf{Leveraging 5G Architectural Features.} $\emulsion$ exploits three structural properties of
5G that prior schemes have not exploited. \textit{(i) Fixed SIB transmission windows:} SIB1
is broadcast every 20--160\,ms~\cite{3gpp38331},
defining natural authentication epochs that a
time-aware scheme can exploit without protocol
modification. \textit{(ii) Time synchronization:} Modern UEs maintain wall-clock time via GNSS with sub-microsecond accuracy, independent of any base-station broadcast. Since 5G gNBs are themselves GNSS-synchronized~\cite{3gppnr38104}, both endpoints share a timing reference outside the FBS adversary's control, making the TESLA security condition ($\Delta t \ll T_{\mathrm{int}}$) trivially satisfiable without relying on the unauthenticated SFN. \textit{(iii) eSIM credential provisioning:} The
eSIM infrastructure stores long-term cryptographic
material in tamper-resistant USIMs~\cite{3gppesim,
gsmasgp} at subscription time, allowing the AMF's
MAYO public key (4,912\,B, stored once offline) to
serve as the root of trust with no over-the-air
transmission, eliminating certificate chains
entirely. Together, $\emulsion$ combines a
TESLA-based HMAC chain anchored per epoch
(exploiting (i) and (ii)) with a MAYO-certified
root key provisioned via eSIM (exploiting (iii)),
amortizing the single asymmetric PQ cost across all
SIBs in the epoch and reducing per-broadcast
overhead to a single HMAC computation in
microseconds with no fragmentation.

\textbf{Symmetric Broadcast Authentication via
One-Way Key Chaining.} TESLA imposes minimal runtime overhead: one $F'$ evaluation and one HMAC per interval at the BS, and one HMAC verification per packet at the UE, both completing in microseconds with no fragmentation and no additional SIB transmissions. This contrasts sharply with asymmetric alternatives: ML-DSA requires $\approx$34 SIB fragments and up to 5,282\,ms end-to-end delay; FN-DSA, even in optimized hybrid configuration, requires $\approx$6 fragments and $\approx$800\,ms
(see~\S\ref{sec:PerformanceEvaluation}). We adopt
the packet-loss-tolerant variant of TESLA with
$d$-interval delayed key disclosure, which is the
natural choice given SIB1's fixed broadcast
periodicity~\cite{3gpp38331}. HMAC-SHA-256
instantiates both $F$ and the per-packet MAC,
satisfying the one-wayness and TCR properties
required by TESLA~\cite{perrig2000efficient}, and
the TESLA security condition is natively satisfiable
since time sync is already native at the RAN
level, requiring no additional
timing infrastructure.


\looseness-1 \textbf{PQ Root Certification and Anchor Selection.} MAYO-2~\cite{beullensmayo} offers the most favorable size-security trade-off among NIST PQC candidates and standards for our deployment model: its 186\,B signature fits within a single SIB1 packet (372\,B limit) with headroom for chain parameters, while its 4,912\,B public key is stored once in the UE eSIM and never transmitted over the air. Among NIST additional signature candidates~\cite{nist-addl-sigs}, MAYO-2 is the only multivariate scheme achieving NIST Level~I with a sub-200\,B signature; ML-DSA produces 2,420\,B and FN-DSA 666\,B at the same level, both requiring fragmentation at the root. MAYO's Oil-and-Vinegar structure provides EUF-CMA security under MQ hardness~\cite{beullensmayo}, and
its batch verification property allows the UE to verify multiple per-epoch chain anchor certificates simultaneously, directly benefiting our per-SIB-type chaining design.

\vspace{-1.5mm}
\subsection{$\emulsion$ Framework Initialization}
\label{subsec:setup}\vspace{-1.5mm}

\subsubsection{\textbf{Entities and Trust Model}} 
\label{subsubsec:entities}
$\emulsion$ involves three entities: the AMF (which hosts the Key Management Function, $\ckg$, as a logical sub-function), the BS (gNB), and the UE. The AMF is the root of trust: it holds the MAYO secret key $\sk_\amf$ and is responsible for generating and certifying all cryptographic material distributed to BSs. The UE trusts only the AMF's MAYO public key $\pk_\amf$, provisioned into its eSIM once, offline, via the GSMA Remote SIM Provisioning (RSP) protocol~\cite{gsmaremotesim}. No direct trust relationship between the UE and any BS is assumed; all BS-level trust is derived from the AMF's certification. 

\subsubsection{\textbf{Root Key Generation}} 
\label{subsubsec:rootkeygen}
At system initialization, the $\ckg$ runs $(\sk_\amf, \pk_\amf) \leftarrow \mayo.\keygen(1^\lambda)$. The secret key $\sk_\amf$ is stored securely within the AMF and
never transmitted. The public key $\pk_\amf$ is provisioned to all UE eSIMs via GSMA RSP and remains valid for the lifetime of the deployment (or until a scheduled key rotation). All BS-level authentication traces back to this single root key.

\looseness-1 \textbf{Compromise Resilience and Key Lifecycle.} $\emulsion$ localizes the effect of compromise. A BS holds only the per-epoch chain material provisioned to it over N2, not any long-term secret, so a compromised BS can forge SIBs only until the epoch expires, after which a fresh certified anchor supersedes it. This contrasts with PKI- and IBS-based schemes, where a leaked BS key remains usable until explicit revocation. Only the AMF root key requires lifecycle management: rotation is performed offline over the GSMA RSP channel already used for provisioning, with UEs holding the current and next public key across a transition window, and root revocation is likewise an RSP update, avoiding per-BS revocation machinery. The remaining open case is cross-operator roaming: since trust is anchored in the home operator's pre-provisioned root, a visited network is authenticated only if its root is provisioned in advance or cross-certified by the home operator; absent this, roaming falls back to unauthenticated SIBs as in current deployments.

\subsubsection{\textbf{Time Synchronization}} 
\label{subsubsec:timesync}
$\emulsion$ uses TESLA, whose security condition requires the UE to bound the sender's current time with a maximum error $\Delta t$. Rather than relying on the SFN carried in the unauthenticated MIB, $\emulsion$ derives its timing reference from the UE's GNSS receiver, which provides wall-clock time with sub-microsecond accuracy from a satellite constellation outside the FBS adversary's control. Since 5G gNBs are themselves GNSS-synchronized for network timing~\cite{3gppnr38104}, both endpoints share a common time reference that is independent of any over-the-air broadcast, making the TESLA security condition trivially satisfiable for any reasonable $T_{\mathrm{int}}$. We empirically validate this in \S\ref{sec:PerformanceEvaluation}.

\vspace{-1mm}
\subsection{$\emulsion$ Framework Main Operations}
\label{subsec:mainoperations} \vspace{-1.5mm}
$\emulsion$ is a symmetric chained, publicly verifiable authentication operating in three phases: Setup, Broadcast, and Verify. It is formalized in Algorithms~\ref{alg:emulsion-setup}--\ref{alg:emulsion-verify} with the full protocol flow shown in Fig.~\ref{fig:emulsion}. The algorithms are stated generically for a message $m$; in practice, $m$ is any SIB type (SIB1 through SIB21), and the $(s)$ superscript denoting SIB type is omitted for clarity. All chain variables, intervals, and certificates are implicitly indexed by the active SIB type. \vspace{-1mm}

\begin{algorithm}[ht!]
\small
\caption{$\emulsion.\setup$}
\label{alg:emulsion-setup}
\begin{algorithmic}[1]
\Statex \vspace{-1mm}\hspace{-5mm}$\underline{(\mathcal{C},\,\cert_{K_0}) \leftarrow \emulsion.\setup(1^\lambda,\,\ell,\,d,\,T_0,\,T_{\mathrm{int}})}$:
Run by the AMF once per chain epoch per SIB type.
\State $(\sk_\amf, \pk_\amf) \leftarrow \mayo.\keygen(1^\lambda)$
    \Comment{Root keypair; $\pk_\amf$ provisioned to UE eSIM via GSMA RSP}
\State $K_\ell \xleftarrow{\$} \{0,1\}^\lambda$
    \Comment{Random terminal key; kept secret at BS}
\For{$i = \ell - 1$ \textbf{downto} $0$}
    \State $K_i \leftarrow F(K_{i+1})$
        \Comment{One-way chain; $K_0 = F^\ell(K_\ell)$ is the public anchor}
\EndFor
\State $\mathsf{info} \leftarrow K_0 \| \id_\bs \| T_0 \| T_{\mathrm{int}} \| d \| \ell$
\State $\cert_{K_0} \leftarrow \mayo.\sign(\sk_\amf,\;\mathsf{info})$
\State $\mathcal{C} \leftarrow \{K_0, K_1, \ldots, K_\ell\}$ 
\State Provision $(\mathcal{C},\,\cert_{K_0},\,\mathsf{info})$ to BS over N2 interface
\end{algorithmic}
\end{algorithm}\vspace{-1mm}
\setlength{\textfloatsep}{0pt}

In \textbf{Setup} (Alg.~\ref{alg:emulsion-setup}), the AMF generates the MAYO root key pair and provisions $\pk_\amf$ to UE eSIMs via GSMA RSP~\cite{gsmasgp}. It constructs a one-way key chain of length $\ell$ by sampling a random terminal key
$K_\ell \xleftarrow{\$} \{0,1\}^\lambda$ and iterating $K_i \leftarrow F(K_{i+1})$ down to the public anchor $K_0 = F^\ell(K_\ell)$. The anchor and TESLA parameters are bound as $\mathsf{info} = K_0 \| \id_\bs \| T_0 \| T_{\mathrm{int}} \| d \| \ell$ and certified as $\cert_{K_0} \leftarrow \mayo.\sign(\sk_\amf, \mathsf{info})$. The chain $\mathcal{C}$ and certificate are provisioned to the BS over N2; the BS then operates autonomously for the entire epoch.

\looseness-1 \textbf{Per-SIB-type independent chains.} $\emulsion$ instantiates one independent chain per SIB type: SIB1 messages are authenticated by chain $\mathcal{C}^{(1)}$, SIB2 messages by $\mathcal{C}^{(2)}$, and so on through SIB21. Each chain is parameterized with its own terminal key, transmission interval $T_{\mathrm{int}}$ matching the 3GPP-defined periodicity of that SIB type~\cite{3gpp38331}, and disclosure depth $d$. The AMF runs $\emulsion.\setup$ independently for each type, producing a dedicated certificate $\cert_{K_0}$ for each chain anchor, and provisions all 21 chains and their certificates to the BS in bulk over N2 at epoch setup (precomputed offline). The BS holds the full set of chains and operates autonomously for the entire epoch, selecting the appropriate chain for each SIB type at broadcast time. The per-SIB broadcast and verification procedures follow Algorithms~\ref{alg:emulsion-sign}--\ref{alg:emulsion-verify} identically for each type. Coverage extends from SIB1 to all 21 SIB types by adding one MAYO signing operation per type at epoch setup (all offline at the AMF), incurring zero additional over-the-air overhead per broadcast, since each chain's per-SIB cost remains a single HMAC tag regardless of SIB type. MIB coverage is also achieved at zero additional cost by including the 3-byte MIB content in the per-packet HMAC input. Details can be found in Appendix~\ref{app:mib}.

\vspace{-2mm}
\begin{algorithm}[ht!]
\small
\caption{$\emulsion.\broadcast$}
\label{alg:emulsion-sign}
\begin{algorithmic}[1]
\Statex \vspace{-1mm}\hspace{-5mm}$\underline{\Pi_i \leftarrow \emulsion.\broadcast(\mathcal{C},\,\cert_{K_0},\,\mathsf{info},\,i,\,m)}$: Run by the BS at each broadcast interval $i$.
\State $K'_i \leftarrow F'(K_i)$
    \Comment{Derive MAC key via $F'$}
\State $\tau_i \leftarrow \hmac(K'_i,\; m \| i \| \id_\bs)$
\State $K_{\mathrm{disc}} \leftarrow K_{i-d}$ \textbf{if} $i \geq d$,
    \textbf{else} $\bot$
    \Comment{Disclose key from $d$ intervals ago}
\If{$i = 1$}
    \State $\Pi_i \leftarrow m \| i \| \tau_i \| K_{\mathrm{disc}} \|
           \mathsf{info} \| \cert_{K_0}$
        \Comment{Epoch-opening packet: include anchor certificate}
\Else
    \State $\Pi_i \leftarrow m \| i \| \tau_i \| K_{\mathrm{disc}}$
        \Comment{Steady-state packets: HMAC-only}
\EndIf
\State Broadcast $\Pi_i$ over DL-SCH
\end{algorithmic}
\end{algorithm}\setlength{\textfloatsep}{0pt}\vspace{-2mm}

In \textbf{Broadcast} (Alg.~\ref{alg:emulsion-sign}), at interval
$i$ the BS derives MAC key $K'_i \leftarrow F'(K_i)$ and computes $\tau_i \leftarrow \hmac(K'_i, m \| i \| \id_\bs)$. It discloses $K_{i-d}$, enabling the UE to retroactively verify earlier buffered packets. The epoch-opening packet ($i=1$) carries $\mathsf{info}$ and $\cert_{K_0}$; all subsequent packets carry only $m$, $i$, $\tau_i$, and $K_{\mathrm{disc}}$, yielding a payload of $\approx$131\,B that fits within a single SIB packet without fragmentation.

\begin{algorithm}[ht!]
\small
\caption{$\emulsion.\auth$}
\label{alg:emulsion-verify}
\begin{algorithmic}[1]
\Statex \vspace{-1mm}\hspace{-5mm}$\underline{\{1,\perp\} \leftarrow \emulsion.\verify(\pk_\amf,\,\Pi_i,\,t_R,\,\Delta t)}$:
Run by the UE upon receiving $\Pi_i$.
\State Parse $\Pi_i \rightarrow (m,\,i,\,\tau_i,\,K_{\mathrm{disc}},\,
    [\mathsf{info},\,\cert_{K_0}])$
    \Comment{Bracketed fields present only when $i = 1$}
\If{$i = 1$}
    \State Parse $\mathsf{info} \rightarrow
        (K_0,\,\id_\bs,\,T_0,\,T_{\mathrm{int}},\,d,\,\ell)$
    \If{$\mayo.\verify(\pk_\amf,\;\mathsf{info},\;\cert_{K_0}) \neq 1$}
        \State \Return $\perp$
            \Comment{Anchor not certified by AMF}
    \EndIf
    \State Store $K_0$ and TESLA parameters as trusted state
\EndIf
\State $i_{\max} \leftarrow
    \lfloor(t_R + \Delta t - T_0)\,/\,T_{\mathrm{int}}\rfloor$
\If{$i_{\max} \geq i + d$}
    \State \Return $\perp$
        \Comment{Security condition violated: $K_{i-d}$ disclosed}
\EndIf
\State Buffer $(m,\,i,\,\tau_i)$
\If{$K_{\mathrm{disc}} \neq \bot$}
    \State Let $j \leftarrow i - d$
    \If{$F^{j}(K_{\mathrm{disc}}) \neq K_0$}
        \State \Return $\perp$
            \Comment{Disclosed key inconsistent with trusted anchor}
    \EndIf
    \State $K'_j \leftarrow F'(K_{\mathrm{disc}})$
    \State Retrieve buffered $(m_j,\,j,\,\tau_j)$ 
    \If{$\hmac(K'_j,\;m_j \| j \| \id_\bs) = \tau_j$}
        \State \Return $1$ and \textsc{Accept}$(m_j)$
    \EndIf
\EndIf
\State \Return $\perp$
\end{algorithmic}
\end{algorithm}
\setlength{\textfloatsep}{0pt}

In \textbf{Verify} (Alg.~\ref{alg:emulsion-verify}), the UE
performs the following steps. (i)~On the epoch-opening packet ($i=1$), it verifies $\cert_{K_0}$ via $\mayo.\verify$ under $\pk_\amf$ from the eSIM, and stores $K_0$ and the TESLA parameters as trusted state for this chain. (ii)~It checks the TESLA security condition $i_{\max} < i + d$, discarding the packet if violated. (iii)~It buffers $(m, i, \tau_i)$ pending key disclosure. (iv)~Upon receiving $K_{\mathrm{disc}}$, it verifies chain consistency: $F^j(K_{\mathrm{disc}}) \stackrel{?}{=} K_0$, $j = i - d$. (v)~It derives $K'_j \leftarrow F'(K_{\mathrm{disc}})$ and verifies the HMAC tag of the buffered packet. The UE accepts $m_j$ only after all steps pass. The maximum authentication delay is $d \cdot T_{\mathrm{int}}$, configurable per SIB type to match its 3GPP-defined transmission
periodicity~\cite{3gpp38331}.
 
\vspace{-0.7mm}
\textbf{Packet Loss Tolerance and Optional FEC.} $\emulsion$ adopts the packet-loss-tolerant TESLA variant with $d$-interval delayed key
disclosure~\cite{perrig2000efficient}: a UE that misses interval $i$ can still verify its buffered $m_i$ upon receiving $K_{i-d}$ in any later packet, with no retransmission required. Each broadcast
packet is self-contained, so a single received packet suffices to authenticate all buffered messages with disclosed keys. For deployments
requiring additional robustness against burst errors (e.g., dense urban or high-interference environments), $\emulsion$ supports an optional FEC layer applied to the HMAC-only payload before broadcast. Steady-state packets carry ${\approx}$131\,B, leaving ${\approx}$241\,B of headroom within the 372\,B SIB1 limit~\cite{3gpp38331} to accommodate FEC redundancy without fragmentation. The BS applies $\fec.\enc$ before broadcast and the UE applies $\fec.\dec$ before HMAC verification; all other protocol steps are unchanged. FEC and HMAC serve orthogonal roles (e.g., channel reliability and cryptographic authenticity, respectively) and any
standard code fitting within the available headroom may be used, including Reed-Solomon (MDS-optimal), 5G NR polar codes (native UE hardware support~\cite{3gppnr38104}), or Raptor codes (rateless, adaptive to variable loss rates). The FEC layer introduces no new cryptographic assumptions and does not affect the PQ security of the HMAC chain or the MAYO root anchor.

\begin{figure}
    \centering
    \includegraphics[width=\columnwidth]{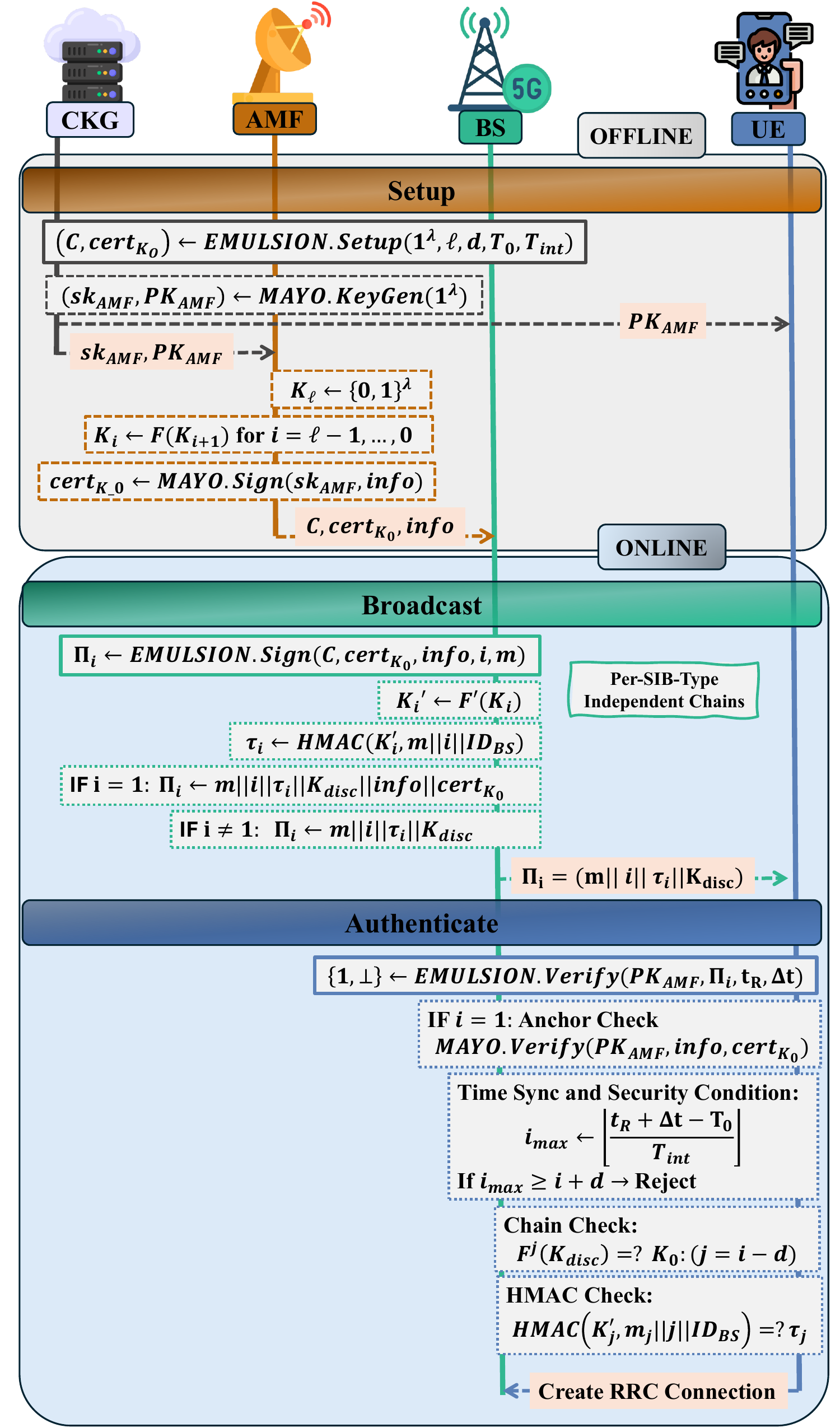}
    \caption{$\emulsion$ protocol flow.}\vspace{-1mm}
    \label{fig:emulsion}
\end{figure}

\vspace{-2mm}
\section{Security Analysis}
\label{sec:securityanalysis} \vspace{-1mm}
We prove $\emulsion$ achieves Definition~\ref{def:emulsion-eufcma} under the security properties of Definitions~\ref{def:sgn-eufcma} and~\ref{def:chain-security}. Any winning adversary $\mathcal{A}$ must succeed at one of three tasks: forge the $\sgn$ anchor certificate $\cert_{K_0}$ produced by $\emulsion.\setup$, forge a per-packet HMAC tag on a SIB broadcast by $\emulsion.\broadcast$ without the chain key, or substitute a fraudulent chain key that passes the consistency check in $\emulsion.\auth$. The following lemmas reduce each task to one of the underlying hard problems with full security proofs presented in Appendix. \vspace{-1mm}

\begin{mylemma}[Anchor Certificate Unforgeability]
\label{lem:anchor}
\textit{If $\mathcal{A}$ forges a fresh $\cert^*_{K_0}$ accepted by $\sgn.\verify$, passing the certificate check in $\emulsion.\auth$, there exists a QPT $\mathcal{B}_1$ breaking EUF-CMA of $\sgn$: $\Pr[\mathcal{A}~\emph{forges}~\cert_{K_0}] \leq \mathsf{Adv}^{\mathsf{EUF\text{-}CMA}}_{\sgn,\mathcal{B}_1}(\lambda)$}. 
\end{mylemma}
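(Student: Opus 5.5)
The proof is a direct simulation reduction. Given a QPT adversary $\mathcal{A}$ against $\emulsion$ that outputs a packet $\Pi^*$ whose epoch-opening fields $(\mathsf{info}^*, \cert^*_{K_0})$ pass the check $\sgn.\verify(\pk_\amf, \mathsf{info}^*, \cert^*_{K_0}) = 1$ in $\emulsion.\auth$ with $\mathsf{info}^*$ fresh, we build $\mathcal{B}_1$ against the EUF-CMA game of Definition~\ref{def:sgn-eufcma}. By ``fresh'' we mean that $\mathsf{info}^*$ was never signed by the AMF. $\mathcal{B}_1$ receives the challenge public key $\pk$ and sets $\pk_\amf \leftarrow \pk$, modeling eSIM pre-provisioning. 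It then runs the rest of $\emulsion.\setup$ itself, in the order below.

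\textbf{Simulating setup.} For each chain (one per SIB type and per epoch that $\mathcal{A}$ may observe), $\mathcal{B}_1$ samples $K_\ell \xleftarrow{\$} \{0,1\}^\lambda$ and computes $K_i = F(K_{i+1})$ down to $K_0$. It forms $\mathsf{info} = K_0 \| \id_\bs \| T_0 \| T_{\mathsf{int}} \| d \| \ell$. It obtains $\cert_{K_0}$ by one query to the signing oracle $\mathcal{O}_S$ on $\mathsf{info}$, and it records every queried $\mathsf{info}$ in a list $Q$.

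\textbf{Simulating broadcasts.} $\mathcal{B}_1$ knows the whole chain $\mathcal{C}$, so it answers every broadcast query $\mathcal{O}_B(i,m)$ exactly as Alg.~\ref{alg:emulsion-sign} does. It computes $K'_i = F'(K_i)$ and $\tau_i$, and discloses $K_{i-d}$. For $i=1$ it attaches $(\mathsf{info}, \cert_{K_0})$ from its stored oracle answers.

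\textbf{Perfect simulation.} The view of $\mathcal{A}$ is then distributed identically to the real experiment in Definition~\ref{def:emulsion-eufcma}. The only secret not held by $\mathcal{B}_1$ is $\sk_\amf$, and it is used solely to produce signatures on the honestly generated $\mathsf{info}$ strings, which $\mathcal{O}_S$ supplies with the correct distribution.

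\textbf{Extraction.} When $\mathcal{A}$ outputs $\Pi^*$, $\mathcal{B}_1$ parses out $(\mathsf{info}^*, \cert^*_{K_0})$ and outputs $(m^*, \sigma^*) = (\mathsf{info}^*, \cert^*_{K_0})$ as its forgery. On the event ``$\mathcal{A}$ forges $\cert_{K_0}$'', verification under $\pk$ succeeds and $\mathsf{info}^* \notin Q$. This is exactly a winning EUF-CMA output, which gives $\Pr[\mathcal{A}~\text{forges}~\cert_{K_0}] \leq \mathsf{Adv}^{\mathsf{EUF\text{-}CMA}}_{\sgn,\mathcal{B}_1}(\lambda)$. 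The running time of $\mathcal{B}_1$ is that of $\mathcal{A}$ plus polynomially many evaluations of $F$, $F'$ and $\hmac$, so $\mathcal{B}_1$ is QPT whenever $\mathcal{A}$ is. Since the oracle interaction is classical, no quantum-rewinding issues arise.

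\textbf{Main obstacle: what ``fresh'' means.} The delicate step is fixing the freshness notion so that it matches standard EUF-CMA rather than strong unforgeability. If $\mathcal{A}$ outputs a different signature $\cert^* \neq \cert_{K_0}$ on an $\mathsf{info}$ already in $Q$, that is not an EUF-CMA win. There are two ways to handle it, and I would try the first. (a) Define the forgery event as $\mathsf{info}^* \notin Q$, arguing that a re-signed but already certified $\mathsf{info}$ yields the same trusted state $(K_0, \id_\bs, T_0, T_{\mathsf{int}}, d, \ell)$ in $\emulsion.\auth$. Such an event therefore gives $\mathcal{A}$ no advantage and is covered by the HMAC and chain-substitution lemmas instead. (b) Assume SUF-CMA of MAYO. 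I would also note that Alg.~\ref{alg:emulsion-setup} as written regenerates the root keypair per run. In the reduction, the keypair is treated as generated once, so all epochs and SIB types share the single challenge key, and the number of signing queries equals the number of chains provisioned.
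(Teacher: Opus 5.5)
Your proposal is correct and follows the paper's proof essentially step for step: embed the EUF-CMA challenge key as $\pk_\amf$, sample the chain honestly, obtain $\cert_{K_0}$ from the signing oracle on $\mathsf{info}$, answer every $\mathcal{O}_B$ query directly from the known chain, and output $(\mathsf{info}^*, \cert^*_{K_0})$ as the $\sgn$ forgery. Your explicit list $Q$ of signed $\mathsf{info}$ strings (which also covers several chains) states the freshness condition more cleanly than the paper's single-query version, which phrases it as $\mathsf{info}^*$ ``never submitted to $\mathcal{O}_B$''.
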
\vspace{-1.5mm}

\begin{mylemma}[Per-Packet MAC Unforgeability]
\label{lem:mac}
\textit{If $\mathcal{A}$ forges a valid HMAC tag on a fresh ${(m^*, i^*)}$ not broadcast by $\emulsion.\broadcast$, without access to ${K_{i^*-d}}$, there exists a QPT $\mathcal{B}_2$ breaking PRF security of $F'$: ${\Pr[\mathcal{A}~\emph{forges MAC}] \leq \ell \cdot \mathsf{Adv}^{\mathsf{PRF}}_{F', \mathcal{B}_2}(\lambda)}$, where $\ell$ accounts for guessing the target interval $i^*$ uniformly among $\ell$ epochs}.
\end{mylemma}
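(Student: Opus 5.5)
We prove Lemma~\ref{lem:mac} with a standard guess-and-embed reduction, run as a short sequence of hybrids. The adversary $\mathcal{B}_2$ is given oracle access to either $F'(K,\cdot)$ under a uniformly random hidden key, or to a truly random function $R$. It first samples a guess $\hat{\imath} \xleftarrow{\$} \{1,\ldots,\ell\}$ for the interval $i^*$ that $\mathcal{A}$ will eventually target. Next it samples $(\sk_\amf,\pk_\amf)$ itself and gives $\pk_\amf$ to $\mathcal{A}$. It does not sample the chain directly. Instead it builds it so that it knows every $K_j$ with $j<\hat{\imath}$, but treats the MAC key $K'_{\hat{\imath}}$ as the value supplied by its PRF challenge. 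Concretely, $\mathcal{B}_2$ answers every HMAC query under $K'_{\hat{\imath}}$ by forwarding to its oracle, viewing the composition $\hmac(F'(K_{\hat{\imath}}),\cdot)$ as a single keyed PRF. It computes all other intervals' tags honestly from keys it knows. Finally it signs $\mathsf{info}$ with $\sk_\amf$ to produce $\cert_{K_0}$ and answers every $\mathcal{O}_B(i,m)$ query exactly as Alg.~\ref{alg:emulsion-sign} prescribes.

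\textbf{Key steps.} First, we argue that the guess is correct with probability $1/\ell$ and independent of $\mathcal{A}$'s view. This independence is what produces the factor $\ell$. Second, we argue that conditioned on a correct guess, the winning condition guarantees that $K_{\hat{\imath}-d}$, and hence $K_{\hat{\imath}}$ itself, has not been disclosed before $\mathcal{A}$ outputs. By the security-condition check $i_{\max}<i+d$ in Alg.~\ref{alg:emulsion-verify}, and since $\Delta t \ll T_{\mathrm{int}}$ under the GNSS assumption, $\mathcal{B}_2$ never has to reveal the challenge key. Third, we use the fact that the verifier in Alg.~\ref{alg:emulsion-verify} checks a tag on the fresh input $m^*\|\hat{\imath}\|\id_\bs$. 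If $\mathcal{B}_2$'s oracle is random, then the value on this fresh point is uniform and independent of everything $\mathcal{A}$ has seen, so $\mathcal{A}$ matches the $t$-bit tag with probability at most $q_v 2^{-t}$. If instead the oracle is the real PRF, the simulation is perfect. $\mathcal{B}_2$ therefore outputs ``real'' exactly when $\mathcal{A}$'s forgery at $\hat{\imath}$ verifies. Combining the two cases gives
\begin{equation*}
\Pr[\mathcal{A}~\text{forges MAC}] \le \ell\cdot\bigl(\mathsf{Adv}^{\mathsf{PRF}}_{F',\mathcal{B}_2}(\lambda) + q_v 2^{-t}\bigr),
\end{equation*}
and the negligible term $q_v 2^{-t}$ is folded into the stated bound.

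\textbf{Main obstacle.} The hardest step is simulating the chain consistently without knowing $K_{\hat{\imath}}$. Later disclosures $K_j$ for $j\ge \hat{\imath}$, which occur after the forgery window, satisfy $F^{j-\hat{\imath}}(K_j)=K_{\hat{\imath}}$. Moreover, $K_0=F^{\hat{\imath}}(K_{\hat{\imath}})$ must appear in $\mathsf{info}$ already at $i=1$. To handle this, we would insert an intermediate hybrid that replaces $K'_{\hat{\imath}}=F'(K_{\hat{\imath}})$ with an independent uniform key, using the pseudorandomness and domain separation of $F'$ from $F$ (Definition~\ref{def:chain-security}(i)). In that hybrid, $\mathcal{B}_2$ samples the chain $\{K_j\}$ honestly and only the MAC key at $\hat{\imath}$ is external. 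This yields a consistent simulation, at the cost of one extra PRF term, which the paper's bound absorbs into $\mathsf{Adv}^{\mathsf{PRF}}_{F'}$. A quantum $\mathcal{A}$ introduces no further complication, because all oracle interactions are classical broadcasts; the reduction is therefore straight-line and involves no rewinding.
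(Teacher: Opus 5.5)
Your skeleton is the paper's: guess the target interval (the source of the factor $\ell$), embed a PRF challenge in the MAC key of that interval, answer every other interval honestly, and output ``real'' iff the forgery at the guessed interval verifies. You are more careful than the paper in three places:
\begin{itemize}
\item $\mathcal{B}_2$ generates $(\sk_\amf,\pk_\amf)$ itself, instead of using a signing oracle it does not have.
\item You keep the unavoidable $q_v2^{-t}$ term.
\item You notice that the chain key $K_{\hat\imath}$ plays the role of the key of $F'$, so the challenge cannot be planted while the chain is known.
\end{itemize}
The paper sidesteps this last point by taking its oracle to be $F'(K^*,\cdot)$ for an unrelated hidden $K^*$ and setting $K'_{i^*}\leftarrow\mathcal{O}(K_{i^*})$. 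That is not the scheme, since the UE evaluates $F'$ on the disclosed key with no further secret.

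The gap is in your resolution of that obstacle. Replacing $F'(K_{\hat\imath})$ by an independent uniform key is not a single PRF step with the chain ``sampled honestly''. A distinguisher for this step must not know $K_{\hat\imath}$, so it cannot compute any $K_j$ with $j>\hat\imath$. Moreover, PRF security needs $K_{\hat\imath}$ to be a uniform key, whereas in the real game $K_{\hat\imath}=F^{\ell-\hat\imath}(K_\ell)$.

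The non-uniformity matters for every $\hat\imath<\ell$. The missing upper keys matter whenever $d\ge 2$. The check $i_{\max}<i+d$ accepts a packet for $\hat\imath$ until interval $\hat\imath+d-1$, so $\mathcal{A}$ may already hold tags under $F'(K_j)$ for $\hat\imath<j<\hat\imath+d$. Your first construction, which knows only $K_j$ for $j<\hat\imath$, cannot produce these tags either.

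The standard repair is a hybrid walking down the chain: for $j=\ell,\ell-1,\ldots,\hat\imath$, replace $(F(K_j),F'(K_j))$ by fresh uniform values.
\begin{itemize}
\item Each step is a PRF reduction keyed by the (by then uniform) $K_j$, treating $F$ and $F'$ as domain-separated HMAC evaluations.
\item Each step is sound because no $K_j$ with $j\ge\hat\imath$ is disclosed before $\mathcal{A}$ outputs; the reduction aborts otherwise.
\end{itemize}
Only after this hybrid is $K'_{\hat\imath}$ uniform and independent. Your embedding then goes through as a PRF challenge against HMAC as the tag function. That coincides with an assumption on $F'$ only through the common HMAC-SHA-256 instantiation.

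The price is up to $\ell-\hat\imath+1$ extra PRF terms, not one. A faithful version of your argument therefore gives a bound of order $\ell^2\,\mathsf{Adv}^{\mathsf{PRF}}+\ell q_v2^{-t}$, unless you model $F,F'$ as random oracles. The paper's $\ell\cdot\mathsf{Adv}^{\mathsf{PRF}}_{F'}$ comes only from its mis-modelled oracle, so you should not expect to recover it.
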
\vspace{-1mm}

\begin{mylemma}[Chain Key Recovery Infeasibility]
\label{lem:chain}
\textit{If $\mathcal{A}$ outputs $K^*_{\mathsf{disc}} \neq K_{i^*-d}$ with $F^j(K^*_{\mathsf{disc}}) = K_0$ where $j = i^*-d$, passing the chain check in $\emulsion.\auth$, there exists a QPT $\mathcal{B}_3$
breaking one-wayness or TCR of $F$: $\Pr[\mathcal{A}~\emph{forges chain key}] \leq \mathsf{Adv}^{\mathsf{OW\text{-}TCR}}_{F, \mathcal{B}_3}(\lambda)$}. 
\end{mylemma}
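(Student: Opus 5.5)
We argue by a direct reduction. Suppose $\mathcal{A}$ outputs $K^*_{\mathsf{disc}} \neq K_{j}$ with $j = i^*-d$ and $F^j(K^*_{\mathsf{disc}}) = K_0$. At that point only the keys $K_0,\ldots,K_{j'}$ with $j' < j$ have been disclosed, which is the freshness condition of Definition~\ref{def:emulsion-eufcma}. The plan is to follow the two chains $F^{j-k}(K^*_{\mathsf{disc}})$ and $F^{j-k}(K_j)$ for $k=0,\ldots,j$. They differ at $k=0$ and agree at $k=j$, where both equal $K_0$. So there is a smallest index $k^\star \in \{1,\ldots,j\}$ at which they first coincide. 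Writing $X = F^{j-k^\star+1}(K^*_{\mathsf{disc}})$ and $Y = K_{k^\star-1}$, we have $X \neq Y$ and $F(X) = F(Y) = K_{k^\star-1-1}$; more precisely, $F(Y) = K_{k^\star-2}$ in chain indexing. The proof splits into two cases according to whether $k^\star-1$ has already been disclosed.

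\emph{Case 1 (collision on a disclosed key): $Y = K_{k^\star-1}$ was disclosed before the forgery.} Here we build a TCR adversary $\mathcal{B}_3$. It embeds its TCR challenge $K$ as a chain element at a guessed position $p$, chosen uniformly from $\{1,\ldots,\ell\}$. It sets $K_p = K$ and computes $K_{p-1},\ldots,K_0$ forward by $F$. It samples $K_{p+1},\ldots$ honestly and simulates the rest of the experiment. The catch is that positions above $p$ are then not consistent with $F(K_{p+1}) = K_p$. To avoid this we assume, as in the TESLA analysis, that TCR holds for chain elements. Equivalently, $\mathcal{B}_3$ itself generates the whole chain and the TCR target is the honestly distributed $K_{k^\star-1}$; this is exactly the form of TCR stated in Definition~\ref{def:chain-security}(iii). $\mathcal{B}_3$ holds $\sk_\amf$ (it ran $\setup$ itself), so it answers broadcast queries perfectly. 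On success it outputs $X$ as the collision for target $Y$. Guessing $k^\star$ costs at most a factor $\ell$, which we absorb into $\mathsf{Adv}^{\mathsf{OW\text{-}TCR}}$. The paper's bound suppresses this loss, just as it does in Lemma~\ref{lem:mac}.

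\emph{Case 2 (no collision, or collision on an undisclosed key).} If $k^\star-1 \geq j$, that is, the chains first merge at or above an undisclosed index, then the forged chain meets the true chain at $K_{j'}$ with $j' \le j-1$, and the merge must occur above the last disclosed key $K_{j-1}$ or at it. The only undisclosed-index merges give $F(K^*_{\mathsf{disc}}) = K_{j-1}$ with $K^*_{\mathsf{disc}} \neq K_j$, which produces a valid preimage of $K_{j-1}$. Now $\mathcal{B}_3$ receives a one-wayness challenge $K_{j-1}$ and builds $K_{j-2},\ldots,K_0$ forward. It answers the oracle queries for intervals whose tags depend on keys at positions $\geq j$ by replacing $F'(K_i)$ with independent random keys. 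This replacement is justified by the PRF property of $F'$ and a hybrid, as in Lemma~\ref{lem:mac}. It then outputs $K^*_{\mathsf{disc}}$ as a preimage, breaking one-wayness of $F$ in its inversion form.

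The main obstacle is the simulation in the one-wayness case. The reduction does not know $K_j,\ldots,K_\ell$, yet it must produce honest-looking tags for intervals $\geq j$. It must also keep those tags from leaking information that would let $\mathcal{A}$ invert $F$ and distinguish the simulation. I would first handle this with the PRF hybrid on $F'$, replacing tag keys by random ones. I would then note that the guess of $j$ loses at most a factor $\ell$. Finally, I would add the two case probabilities to obtain $\Pr[\mathcal{A}~\text{forges chain key}] \le \mathsf{Adv}^{\mathsf{OW\text{-}TCR}}_{F,\mathcal{B}_3}(\lambda)$, with the polynomial factors and the $F'$ hybrid term folded into the combined advantage.
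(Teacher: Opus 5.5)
Your merge-point extraction and your Case~1 are essentially the paper's proof. Its $\mathcal{B}_3$ also generates the whole chain, iterates $F$ forward from $K^*_{\mathsf{disc}}$ until it meets the honest chain, and outputs the first merge as a TCR collision against an honest chain element. Two things you do better:
\begin{itemize}
\item You say explicitly that this needs TCR for honestly generated chain elements, not for a challenger-chosen target. The paper's setup instead ``receives $K_0$ as challenge'' and then resamples $K_\ell$ until $F^\ell(K_\ell)=K_0$, which no efficient reducer can do.
\item Letting $\mathcal{B}_3$ generate the MAYO key pair itself is cleaner than the paper's appeal to a signing oracle that an OW/TCR adversary does not have.
\end{itemize}
Fix the index bookkeeping, though. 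With exponent $j-k$, your two sequences agree at $k=0$ and differ at $k=j$, the reverse of what you state. Cleanly: let $m$ be the smallest position with $X:=F^{j-m}(K^*_{\mathsf{disc}})\neq K_m$. Then $1\le m\le j$ and $F(X)=F(K_m)=K_{m-1}$.

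The case split is where the proposal goes wrong, and it is unnecessary. Since $\mathcal{B}_3$ knows $K_j$, a merge at $m=j$ is already a collision $(K^*_{\mathsf{disc}},K_j)$ caught by the same extraction. Whether $K_m$ was disclosed to $\mathcal{A}$ plays no role. For the same reason the paper's separate ``preimage'' case is vacuous: $K^*_{\mathsf{disc}}\neq K_j$ together with $F^j(K^*_{\mathsf{disc}})=F^j(K_j)$ forces a merge.

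As written, your split is also incomplete. Case~2 treats $K_{j-1}$ as the last disclosed key, but freshness only guarantees that $K_j$ is undisclosed. A merge at an undisclosed position $m<j$ therefore falls into neither case.

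Case~2 also does not deliver the stated inequality:
\begin{itemize}
\item The challenge must be planted at a guessed $j$, costing a factor $\ell$.
\item Replacing $F'(K_i)$ by random keys needs joint pseudorandomness of $(F(K),F'(K))$ under a common key, applied top-down from $K_\ell$. PRF security of $F'$ alone does not cover $F'(K_j)$ when $\mathcal{A}$ holds $F(K_j)=K_{j-1}$ and $K_j$ is not uniform.
\item A preimage $K^*_{\mathsf{disc}}\neq K_j$ is not a break of Definition~\ref{def:chain-security}(ii) as stated, since that definition asks for $K_{i+1}$ itself.
\end{itemize}
Folding an $\ell$ factor and a PRF term into $\mathsf{Adv}^{\mathsf{OW\text{-}TCR}}_{F,\mathcal{B}_3}$ is not a proof of the stated bound. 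Also, Lemma~\ref{lem:mac} states its $\ell$ explicitly rather than suppressing it, contrary to your remark. Drop Case~2 and the guess of $k^\star$; the reducer can locate $m$ after the fact. Your argument then gives the lemma's tight bound under the same chain-element TCR assumption the paper's proof implicitly uses.
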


\begin{theorem}[$\emulsion$ EUF-CMA Security]
\label{thm:emulsion}
\textit{For any QPT adversary $\mathcal{A}$ against $\emulsion$ over a chain of length $\ell$, there exist a QPT $\mathcal{B}_1, \mathcal{B}_2, \mathcal{B}_3$ such that: $\mathsf{Adv}^{\mathsf{EUF\text{-}CMA}}_{\emulsion, \mathcal{A}}(\lambda) \;\leq\; \mathsf{Adv}^{\mathsf{EUF\text{-}CMA}}_{\sgn, \mathcal{B}_1}(\lambda) + \ell \cdot \mathsf{Adv}^{\mathsf{PRF}}_{F',\mathcal{B}_2}(\lambda) +\; \mathsf{Adv}^{\mathsf{OW\text{-}TCR}}_{F, \mathcal{B}_3}(\lambda) + \mathsf{negl}(\lambda)$}.
\end{theorem}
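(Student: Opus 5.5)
The proof is a case split on how a winning adversary $\mathcal{A}$ in $\mathsf{Exp}^{\mathsf{EUF\text{-}CMA}}_{\emulsion,\mathcal{A}}$ gets $\emulsion.\auth$ to accept. The three Lemmas~\ref{lem:anchor}--\ref{lem:chain} bound the three cases, and a union bound combines them. Fix a winning output $(m^*, i^*, \Pi^*)$. Acceptance means three checks passed: the UE holds trusted state $(K_0^*, \mathsf{info}^*)$ that passed $\sgn.\verify$ under $\pk_\amf$; the disclosed key $K^*_{\mathsf{disc}}$ satisfies $F^{j}(K^*_{\mathsf{disc}}) = K_0^*$ with $j = i^*-d$; and the buffered tag $\tau^*$ verifies under $F'(K^*_{\mathsf{disc}})$. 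We define three events:
\begin{itemize}
\item $E_1$: $\mathsf{info}^* \neq \mathsf{info}$, so the anchor differs from the honestly certified one.
\item $E_2$: $\mathsf{info}^* = \mathsf{info}$ and $K^*_{\mathsf{disc}} \neq K_{i^*-d}$.
\item $E_3$: $\mathsf{info}^* = \mathsf{info}$ and $K^*_{\mathsf{disc}} = K_{i^*-d}$, yet $(m^*, i^*)$ is fresh.
\end{itemize}
These events partition the winning event, so $\mathsf{Adv}^{\mathsf{EUF\text{-}CMA}}_{\emulsion,\mathcal{A}} \le \Pr[E_1]+\Pr[E_2]+\Pr[E_3]$.

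\textbf{Case $E_1$.} $\mathcal{B}_1$ receives $\pk$ from the EUF-CMA challenger and hands it to $\mathcal{A}$ as $\pk_\amf$. It generates the chain itself and obtains $\cert_{K_0}$ by a single query to $\mathcal{O}_S$ on $\mathsf{info}$. It answers every $\mathcal{O}_B$ query honestly, since it knows all $K_i$. Under $E_1$, the pair $(\mathsf{info}^*, \cert^*)$ is a valid signature on an unqueried message, and this gives Lemma~\ref{lem:anchor}. Per-SIB-type chains add one signing query per type, which the same reduction absorbs.

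\textbf{Case $E_2$.} Here $F^j(K^*_{\mathsf{disc}}) = K_0 = F^j(K_{j})$ with $K^*_{\mathsf{disc}} \neq K_j$. Walk along the two iterates. Either there is a first index $r$ where they collide, $F(x) = F(x')$ with $x \neq x'$ and $x$ an honest chain key, which is a TCR break; or the iterates hit an honest key $K_{j'}$ that is not yet disclosed and is computed from public data, which breaks one-wayness. Because $K_{i^*-d}$ is not disclosed by the winning condition, at least one of these happens. Lemma~\ref{lem:chain} packages this as $\mathsf{Adv}^{\mathsf{OW\text{-}TCR}}_{F,\mathcal{B}_3}$.

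\textbf{Case $E_3$.} The correct key is used, but it is still undisclosed at the time the tag $\tau^*$ is committed. $\mathcal{B}_2$ guesses $i^*$ uniformly among the $\ell$ intervals. It replaces $K'_{i^*} = F'(K_{i^*})$ by its PRF oracle and answers the broadcasts for interval $i^*$ via that oracle. The timing check $i_{\max} < i^*+d$ in $\emulsion.\auth$, with GNSS-bounded $\Delta t$, guarantees that $\mathcal{A}$ commits $\tau^*$ before $K_{i^*}$ is released. Thus the tag on a fresh input $m^*\|i^*\|\id_\bs$ predicts a value of the oracle at an unqueried point. For a truly random function this happens with probability at most $q_v 2^{-t}$, which is the $\mathsf{negl}(\lambda)$ term, and the guessing loss gives the factor $\ell$ of Lemma~\ref{lem:mac}. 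Summing the three bounds yields the theorem.

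\textbf{Main obstacle.} I expect Case $E_3$ to be the hardest, for two reasons. First, $\mathcal{B}_2$ must simulate the later disclosures of $K_{i^*}$ and of the keys $K_{i}$ for $i > i^*$, which $F'$'s PRF game does not supply. The plan is to have $\mathcal{B}_2$ sample the chain itself and treat only the map $K_{i^*}\mapsto K'_{i^*}$ as the PRF, arguing that the HMAC keyed by $K'_{i^*}$ remains pseudorandom on fresh inputs. This needs a hybrid step in which $F'$'s key is the secret chain element. Second, the simulation must stop at $\mathcal{A}$'s forgery before $K_{i^*}$ would be disclosed, and that is exactly where the time-synchronization condition is used.
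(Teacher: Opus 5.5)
Your decomposition matches the paper's. The events $E_1,E_2,E_3$ correspond to the aborts in the hybrids $G_0\to G_1\to G_2\to G_3$ and are charged to Lemmas~\ref{lem:anchor}, \ref{lem:chain} and~\ref{lem:mac}. Writing them as a partition of the winning event is, if anything, cleaner: exhaustiveness becomes explicit and the residual $\mathsf{negl}(\lambda)$ is pinned down as $q_v2^{-t}$.

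\textbf{The gap is in $E_3$}, and the plan in your ``main obstacle'' paragraph does not close it.
\begin{itemize}
\item Before forging, $\mathcal{A}$ may already hold the disclosed key $K_{i^*-1}=F(K_{i^*})$. For $d\ge 2$ it also holds tags under $K'_{i^*+1},\dots,K'_{i^*+d-1}$, which are functions of keys that determine $K_{i^*}$.
\item So replacing $K'_{i^*}$ by the output of a PRF oracle with an independent hidden key does not simulate the real game. Nothing then links $\mathcal{A}$'s success in that simulation to its real success.
\item Letting $\mathcal{B}_2$ sample the chain itself defeats the purpose. $\mathcal{B}_2$ then knows $K_{i^*}$, the very key whose secrecy the PRF assumption on $K\mapsto F'(K)$ needs.
\item Embedding the challenge key as $K_{i^*}$ instead leaves $\mathcal{B}_2$ unable to produce the later keys and tags.
\end{itemize}
The missing idea is a forward-secure-PRG hybrid along the chain. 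View $K\mapsto(F(K),F'(K))$ as one PRF (HMAC keyed by the chain element) evaluated at two domain-separated inputs. Starting from the uniform $K_\ell$ and moving down to $k=i^*$, successively replace $(F(K_k),F'(K_k))$ by independent uniform values. Only then is $K'_{i^*}$ uniform and independent of the view. A \emph{separate} step, PRF/MAC security of $\mathsf{HMAC}(K'_{i^*},\cdot)$, then gives the $q_v2^{-t}$ bound. At present you conflate the PRF that outputs $K'_{i^*}$ with the function $\mathsf{HMAC}(K'_{i^*},\cdot)$ whose value $\tau^*$ must predict. This argument uses PRF security of $F$ as well as $F'$, and a hybrid whose length grows with $\ell$. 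It therefore gives the stated bound only if you argue that all these terms collapse into a single HMAC advantage. The paper's Lemma~\ref{lem:mac} makes the same substitution $K'_{i^*}\leftarrow\mathcal{O}(K_{i^*})$ and calls the simulation perfect, so it does not supply the fix.

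\textbf{Your indices are inconsistent.} $E_2$ and $E_3$ compare $K^*_{\mathsf{disc}}$ with $K_{i^*-d}$, inherited from Definition~\ref{def:emulsion-eufcma}, while the $E_3$ reduction treats $K_{i^*}$ as the MAC key. Read literally, $E_3$ says $\mathcal{A}$'s own packet contains an honest key that has not yet been disclosed. That is a key-prediction event, which no MAC argument bounds. Let $i^*$ denote the interval of the forged message throughout, so its key $K_{i^*}$ is disclosed in packet $i^*+d$. Then split $E_3$ into two subcases:
\begin{itemize}
\item $K_{i^*}$ arrives by honest disclosure. This is the true MAC-forgery case.
\item $\mathcal{A}$ presents $K_{i^*}$ early. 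Charge this to the chain, for example one-wayness after the hybrid.
\end{itemize}

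\textbf{In $E_2$, a first collision always exists}, whatever the disclosure condition, because the two iterates start from different values and both reach $K_0$. However, since $\mathcal{B}_3$ generated the chain itself, what it extracts is a plain collision for $F$, not a TCR break. TCR would require embedding a challenge target at a guessed chain position, which runs into the same simulation problem as above. Assuming collision resistance of $F$ handles $E_2$ directly, with no one-wayness branch.
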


\section{Performance Evaluation}
\label{sec:PerformanceEvaluation}
This section presents a comprehensive evaluation of $\emulsion$ against NIST-PQC standards and conventional authentication schemes for 5G initial bootstrapping. \vspace{-2.5mm}

\subsection{Configuration and Experimental Setup}
\label{subsec:Configuration} 
\noindent\textbf{Hardware:} We assessed the efficiency of $\emulsion$ protocol on a system equipped with a standard desktop equipped with a $12^{th}$ Gen Intel Core ${i7-12700H} @ 3.50~GHz$, $16~GiB$ RAM, a $512~GiB$ SSD, and Ubuntu $22.04.4~LTS$. Real network packets were investigated using the Network Signal Guru Android app installed on a OnePlus Nord 5G smartphone~\cite{netsigguru}.

\noindent \textbf{Over-the-Air Testbed.} We deploy an SDR-based testbed for over-the-air evaluation using the open-source srsRAN and Open5GS stacks. srsUE and srsgNB run on two USRP B210 devices connected to the same host via USB~3.0, with a Leo Bodnar GPSDO providing a stable 10\,MHz reference clock. The srsgNB connects to an Open5GS core and communicates with srsUE over the air. Since commercial UE basebands are closed-source, we adopt a best-effort methodology using this widely used srsRAN and Open5GS platforms, consistent with prior work on 4G/5G bootstrapping security~\cite{ross2024fixing, hussain2019insecure}.


\noindent\textbf{Libraries:} We employed the OpenSSL library\footnote{OpenSSL Library: \url{https://openssl-library.org/}} for cryptographic primitives such as hash functions and elliptic curve operations (e.g., point multiplication, modular arithmetic), the Open Quantum-Safe library\footnote{Open Quantum-Safe Library: \url{https://openquantumsafe.org/}} for NIST-PQC schemes used in the baseline comparisons and for the MAYO-2 signing and verification operations within $\emulsion$, and the blst\footnote{BLST Library: \url{https://github.com/Chia-Network/bls-signatures}} library for the $BLS$ signature.

\noindent\textbf{Parameter Selection:} We configured the post-quantum security to NIST Level~I~\cite{alagic2022status}, which provides quantum resistance approximately equivalent to $128$-bit classical security. For the conventionally secure baselines, all elliptic curve operations were performed over \textit{secp256k1}, defined on a $256$-bit prime field. In $\emulsion$, both the one-way chain PRF~$F$ and the key derivation PRF~$F'$ are instantiated as HMAC-SHA256 with distinct key generation to ensure cryptographic independence. 
MAC tags are truncated to 16 bytes, giving $\approx 2^{128}$ forgery resistance; Grover offers no speedup here, as forging a tag requires online verification attempts against the receiver rather than offline search.


\noindent \textbf{Evaluation Metrics and Rationale:} \label{subsec:EvaluationMetrics}
Quantitative metrics include computational costs (signing, verification, and per-packet MAC operations), 5G processing delay (the time network entities spend handling cryptographic material in packets and transmitting them, excluding the cryptographic computations themselves), cryptographic overhead (signature, key, and certificate sizes), total over-the-air (OTA) communication overhead, and end-to-end (E2E) delay. Qualitative evaluation considers system architecture, PQ security guarantees, and resilience to packet loss.

\noindent \textbf{srsRAN Configuration:} For the first SIB1 message, the srsRAN gNB utilizes $79$ bytes out of the allowed $372$ bytes. All subsequent evaluations report the computational and communication overhead for a $79$-byte SIB1 message, with $293$ bytes available per packet for authentication material, as reflected in the tables. Even considering slightly larger SIB1 configurations observed from real networks, our results remain consistent. Moreover, while we present the evaluation of $\emulsion$ on SIB1, the scheme generalizes to all SIB types, as the one-way chain and MAC-based authentication operate independently of the specific SIB content.

\noindent\textbf{Baseline Selection:}
For PQ baselines, we consider the NIST-standardized $\textit{ML\mbox{-}DSA}$~\cite{dang2024module} (lattice-based) and $\textit{FN\mbox{-}DSA}$~\cite{soni2021falcon} (lattice-based), both in homogeneous
2-level certificate chains (e.g., $\textit{FN\mbox{-}DSA}$-$\textit{FN\mbox{-}DSA}$, $\textit{ML\mbox{-}DSA}$-$\textit{ML\mbox{-}DSA}$) and in a hybrid configuration where the root certificate uses MAYO and the BS uses $\textit{FN\mbox{-}DSA}$ ($\textit{FN\mbox{-}DSA}$-MAYO) where we allow an optimization in favor of $\textit{FN\mbox{-}DSA}$-MAYO by provisioning the MAYO public key ($\pk_\mayo$) in the eSIM, eliminating the need to transmit it over the air, and also a variant augmented with Reed-Solomon erasure coding (FEC) for packet-loss resilience. Hash-based $\textit{SLH\mbox{-}DSA}$~\cite{cooper2024stateless}, with a $7856$-byte signature, is excluded from detailed comparison due to its prohibitively large size and slow execution time (${\sim}11$~ms signing, ${\sim}0.84$~ms verification).  
For conventionally secure baselines, we include $\textit{EC\mbox{-}Schnorr}$~\cite{schnorr1991efficient} and $\textit{BLS}$~\cite{boneh2001short} with
aggregation, both deployed in a 2-level certificate
hierarchy. While these schemes offer favorable
performance, they lack PQ security guarantees and
serve as reference points for understanding the
computational overhead that $\emulsion$ introduces
relative to classical alternatives.
\vspace{-1mm}

\begin{table*}[ht]
    \caption{Comparison of candidate signature schemes for authenticating SIB1.}
    \label{tab:PerformanceComparison}
    \centering
    \begin{tabular}{|c||@{}c@{}|c|c|c|@{}c@{}|c|c|c|c|}
         \hline \multirow{2}{*}{\textbf{Scheme}} & \textbf{System Architecture} & \textbf{Sign} &\textbf{Ver}  & \textbf{5G} & \textbf{Crypto.} & \textbf{Total} & \textbf{E2E} &\textbf{Auth. Success} \\
         &\textbf{and Features}&\textbf{Delay}& \textbf{Delay (ms)}&\textbf{Delay (ms)} & \textbf{Overhead (B)}& \textbf{OTA (B)} & \textbf{Delay (ms)} & \textbf{@10\% Loss} \\\hline
         \textbf{\textit{FN-DSA}}~\cite{fouque2018falcon} & 2-Level Certificate & $0.28~ms$ & $0.15$ & $1920.24$ & $3792$ & $4836$ & $1920.67$ & $25.4\%$\\ \hline
          \textbf{\textit{FN-DSA}}-MAYO & $\pk_\mayo$ in eSIM & $0.28~ms$ & $0.07$ & $800.24$ & $1749$ & $2232$ & $800.59$ & $53.1\%$\\ \hline
          \textbf{\textit{FN-DSA}}-MAYO & $\pk_\mayo$ in eSIM, FEC(n,k) & $0.3~ms$ & $0.07$ & $960.24$ & $2051$ & $2604$ & $960.59$ & $85\%$\\ \hline
          \textbf{$\textit{ML\mbox{-}DSA}$} \cite{dang2024module} & 1-Level Certificate & $0.12~ms$ & $0.09$ & $3201.32$ & $6152$ & $7440$ & $3201.53$ & $10.9\%$ \\ \hline
         \textbf{$\textit{ML\mbox{-}DSA}$} \cite{dang2024module} & 2-Level Certificate & $0.12~ms$ & $0.12$ & $5282.23$ & $9884$ & $12648$ & $5282.47$ & $2.8\%$\\ \hline
        
         \hline\hline
         \textbf{$\emulsion$} & $\pk_\mayo$ in eSIM, $|\mathcal{C}|= 2, d=1$ & $1.3~\mu s$ & $0.03$& $160.04$ & $324$ & $744$ & $160.07$ & $90.0\%$\\
         \hline
    \end{tabular}
    \begin{minipage}{\linewidth}
    \small \vspace{+1mm}
    \textbf{Note:} E2E delay presents the total time for signature generation, 5G delay, and full verification. Auth.\ Success reports epoch establishment probability at 10\% packet loss. The impact of varying chain length $|\mathcal{C}|$ across different authentication ratios is discussed in detail in the experimental results section. 
\end{minipage} \vspace{-6mm}
\end{table*}

\subsection{Experimental Results}
\label{subsec:EvaluationResults} \vspace{-1mm}
TABLE~\ref{tab:PerformanceComparison} presents a comprehensive quantitative and qualitative comparison of candidate schemes for SIB1 authentication in the full 5G hierarchical bootstrapping context, covering signing and verification time, 5G processing and transmission delay, cryptographic and total OTA overhead (bytes), and E2E delay. Results are averaged over 10,000 iterations for standalone measurements and 10 iterations for over-the-air testbed runs (due to the requirement for manual intervention).

\subsubsection{Quantitative Comparison} 
This section analyzes the computational and communication overhead of $\emulsion$ alongside PQ and conventionally secure alternatives.
 
\noindent\underline{\textit{Computational Costs:}}  
A distinguishing feature of $\emulsion$ is BS offloading: the MAYO-2 signature ($\mathit{cert}_{K_0}$) is computed by the AMF offline during epoch setup, so the BS performs only two HMAC-SHA-256 operations per SIB1 packet (one for key derivation and one for MAC computation) at a per-packet signing cost of ${\sim}0.6\,\mu$s,
roughly three orders of magnitude faster than any signature-based alternative (FN-DSA: $0.28$\,ms, EC-Schnorr: $0.30$\,ms, BLS: $0.42$\,ms). Since a gNB may serve hundreds of cells simultaneously, this negligible per-packet cost has significant practical impact on BS infrastructure load. 
On the UE side, steady-state verification (all packets after the epoch-opening $P_1$) requires one HMAC chain check and one MAC verification at ${\sim}0.03$\,ms, with the one-time MAYO-2 certificate verification adding another ${\sim}0.03$\,ms amortized over the epoch. Full PQ certificate chains impose heavier verification: ML-DSA-ML-DSA requires $0.12$\,ms and FN-DSA-FN-DSA requires $0.15$\,ms for two-level verification, while conventionally secure schemes incur higher costs due to pairing (BLS: $3.46$\,ms) or multi-level EC point multiplications (EC-Schnorr: $3.80$\,ms).  
The net result is that $\emulsion$'s computational footprint is dominated entirely by the 5G transmission schedule rather than cryptographic processing: the total cryptographic component of $\emulsion$'s E2E delay (${\sim}0.03$\,ms) is negligible compared to the $160$\,ms SIB1 broadcast periodicity, as confirmed in
Table~\ref{tab:PerformanceComparison}.

\noindent\underline{\textit{Communication Overhead:}} 
$\emulsion$ uses two packet types. The epoch-opening packet ($P_1$) carries the SIB1 message (79\,B), interval index (4\,B), MAC tag (16\,B), chain anchor $K_0$ (32\,B), BS identity and metadata (${\sim}34$\,B), and MAYO-2 certificate (186\,B), totaling ${\sim}351$\,B, within the 372-byte limit. Subsequent packets ($P_2, P_3, \ldots$) carry only the message, interval index, MAC tag, and a disclosed key, totaling 131\,B with just 52\,B of authentication overhead. For a chain of length $|\mathcal{C}|$, total OTA is $(351-79) + (|\mathcal{C}|-1)\times 52$\,B per authentication payload; considering full 372-byte SIB1 blocks, $|\mathcal{C}|{=}2$ requires 744\,B over 2 packets, $|\mathcal{C}|{=}3$ requires 1,116\,B over 3 packets, and $|\mathcal{C}|{=}4$ requires 1,488\,B over 4 packets, with every packet
fitting within a single SIB1 transport block and zero fragmentation required.  
By comparison, full PQ certificate chains impose dramatically higher overhead. The ML-DSA-ML-DSA requires 9,884\,B of cryptographic material across 34 packets (12,648\,B total OTA), and FN-DSA-MAYO with $\pk_\mayo$ pre-provisioned requires 1,749\,B across 6 packets (2,232\,B total OTA), rising to 2,051\,B across 7 packets (2,604\,B total OTA) when Reed-Solomon FEC is added for loss resilience. $\emulsion$ at $|\mathcal{C}|{=}2$ achieves 324\,B cryptographic
overhead ($5.4\times$ lower than FN-DSA-MAYO and $31\times$ lower than ML-DSA-ML-DSA) and even at $|\mathcal{C}|{=}4$ remains at 428\,B, still
$4.1\times$ lower than FN-DSA-MAYO.

\looseness-1 \noindent$\bullet$~\underline{\textit{5G Delay and E2E Latency:}} 
The dominant cost for all schemes is the 5G delay, driven by the SIB1 broadcast periodicity of 20--160\,ms; we evaluate at 160\,ms (the common default)  as the conservative upper bound. For multi-packet schemes, 5G delay is $(\text{packets}-1)\times160$\,ms plus per-packet processing. ML-DSA-ML-DSA requires 34 packets and incurs 5,282.23\,ms (${\sim}5.3$\,s), entirely impractical for bootstrapping before RACH. FN-DSA-FN-DSA requires 13 packets and 1,920.24\,ms; FN-DSA-MAYO with eSIM pre-provisioning reduces this to 6 packets and 800.24\,ms, rising to 7 packets and 960.24\,ms with FEC. $\emulsion$ at $|\mathcal{C}|{=}2$ achieves 160.07\,ms E2E delay, $5\times$ faster than FN-DSA-MAYO (800.59\,ms), $12\times$ faster than FN-DSA-FN-DSA (1,920.67\,ms), and $33\times$ faster than ML-DSA-ML-DSA (5,282.47\,ms). Even at $|\mathcal{C}|{=}4$, $\emulsion$'s 480.11\,ms remains $1.7\times$ faster than the best PQ certificate chain. Table~\ref{tab:PerformanceComparison} reports delay from the epoch-opening packet. A UE arriving mid-epoch waits on average $|C|\cdot T_{int}/2$ more, i.e. $320$~ms at $|C|=2$, still below FN-DSA-MAYO's $800$ms. Larger $|C|$ lowers amortized overhead but raises this wait. The wait applies only at first contact: a UE returning to a cell with cached anchor and parameters verifies from the next steady-state packet.

\begin{figure}[t]
  \centering
  \begin{minipage}[c]{0.52\columnwidth}
    \centering
    \includegraphics[width=\linewidth]{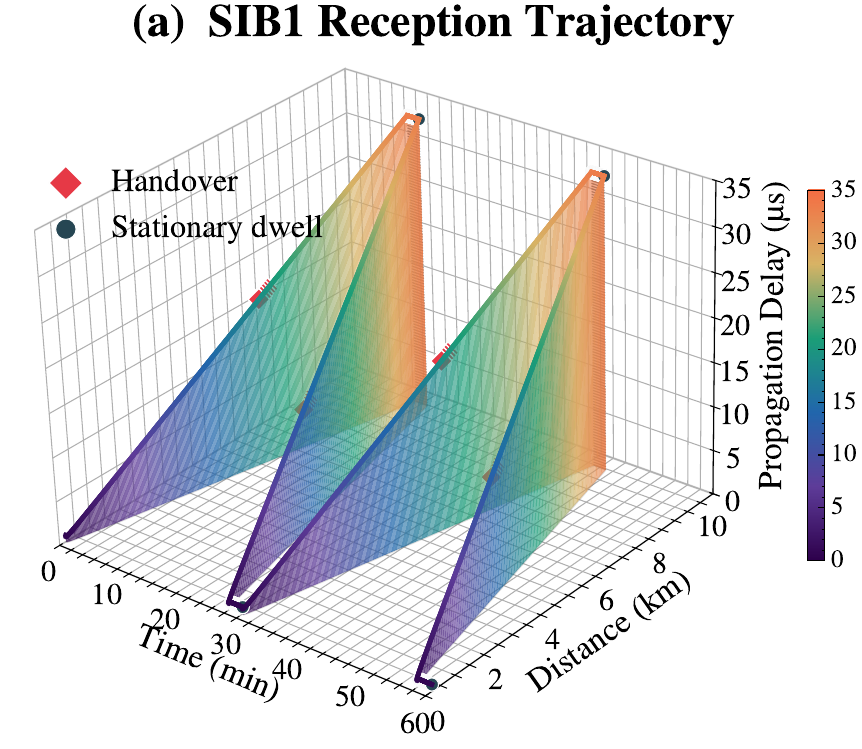}
  \end{minipage}%
  \hfill
  \begin{minipage}[c]{0.46\columnwidth}
    \centering
    \includegraphics[width=0.7\linewidth]{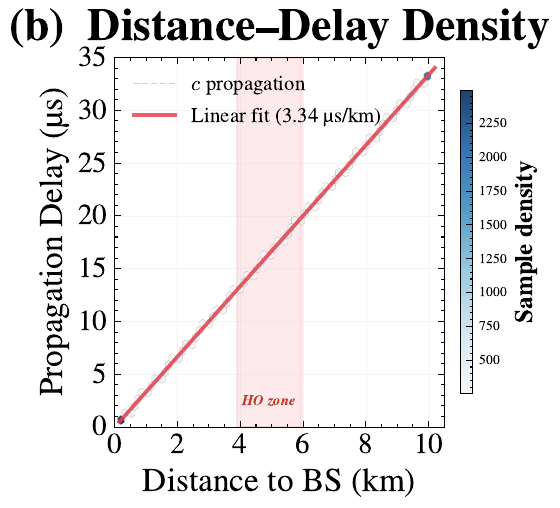}\\
    \includegraphics[width=0.8\linewidth]{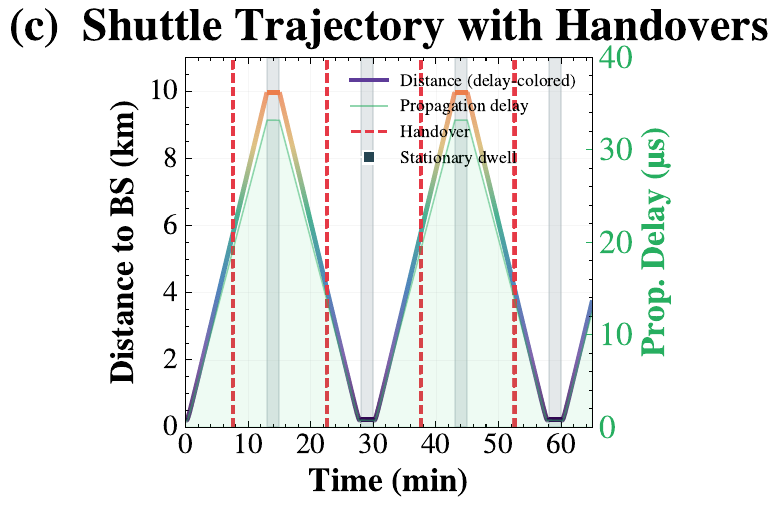}
  \end{minipage}\vspace{-1mm}
  \caption{SIB1 reception characterization on POWDER testbed.}
  \vspace{0.4em}
  \label{fig:shuttle_sib1}\vspace{-2mm}
\end{figure}

To assess whether distance, mobility, and handover introduce measurable overhead, we conduct a real-network experiment on the POWDER testbed~\cite{powder}. Fig.~\ref{fig:shuttle_sib1} shows SIB1 reception traces from a campus shuttle over a ${\sim}60$-minute route with multiple cell attachments and handovers (See Appendix~\ref{app:powder} for experimental setup). Panel~(a) visualizes receptions across time, distance, and propagation delay with visible handover events at cell boundaries. Panel~(b) confirms propagation delay scales linearly at $3.34\,\mu$s/km, reaching at most ${\sim}35\,\mu$s at 10\,km (four orders of magnitude below $\emulsion$'s 160\,ms E2E delay). Panel~(c) traces the UE's distance over the full route, marking handovers and stationary dwells. Across all conditions (with varying distance (0--10\,km), speeds, and repeated handovers) the propagation component remains in the tens of
microseconds, confirming that neither distance, mobility, nor handover introduces any observable impact on authentication delay for $\emulsion$ or any baseline scheme.


\noindent\underline{\textit{UE-Side Overhead:}} $\emulsion$ requires the UE to buffer at most $d$ pending messages awaiting key disclosure
($d\times372$\,B; e.g., 372\,B for $d{=}1$ and 1,116\,B for $d{=}3$). In contrast, PQ certificate-chain schemes require buffering and reassembling fragments across 6--34 consecutive SIB1 packets before any verification can begin, demanding up to 12,648\,B for ML-DSA-ML-DSA and introducing additional sequencing logic into the protocol stack. A single lost fragment forces the UE to discard the entire buffer and restart collection from the next broadcast cycle. In
$\emulsion$, a missed packet affects at most $d$ buffered messages; the next received packet discloses a new key and authentication resumes
immediately without any state resynchronization.


\subsubsection{Qualitative Comparison}
This section examines the structural and security properties that differentiate $\emulsion$ from alternative approaches.

 
\looseness-1 \noindent\underline{\textit{Fragmentation and Protocol Compatibility:}} Every packet of $\emulsion$, including the epoch-opening packet carrying the MAYO-2 certificate, fits within a single 372-byte SIB1 transport block, eliminating fragmentation entirely and preserving full compatibility with the existing 5G protocol stack without RRC modifications, new message types, or application-layer FEC. PQ certificate-chain schemes are fundamentally incompatible with this constraint: ML-DSA-ML-DSA requires ${\sim}10$\,KB of cryptographic material across 34 packets (${\sim}27\times$ the transport block size), and even FN-DSA-FN-DSA requires 13 packets. Fragmentation introduces three practical challenges: \textit{(i)} the UE must buffer and order fragments across multiple broadcast cycles,
complicating the protocol stack; \textit{(ii)} loss of any single fragment invalidates the entire authentication, since partial PQ signatures cannot be verified; and \textit{(iii)} applying FEC to mitigate loss adds further packets and delay, exacerbating rather than resolving the overhead.

 
\looseness-1\noindent\underline{\textit{Post-Quantum Security:}} $\emulsion$ achieves full PQ authentication for SIB broadcasts. The chain anchor $K_0$ is certified by MAYO-2, whose security rests on MQ hardness, while the one-way chain and per-packet HMAC derive security from the PRF assumption via HMAC-SHA-256, providing 128-bit PQ security at NIST Level I. The entire authentication path from AMF root of trust through per-packet MAC verification is therefore PQ-secure. In contrast, BLS (q-SDH), EC-Schnorr (ECDLP), and
Schnorr-HIBS~\cite{singla2021look} all rely on hardness assumptions broken by Shor's algorithm, offering no long-term security against
cryptographically relevant quantum computers.

 
\noindent\underline{\textit{Asymmetric Cost Amortization:}} $\emulsion$ concentrates the asymmetric cost in the epoch-opening packet and amortizes it across the chain. The MAYO-2 signature is computed once per epoch by the AMF offline and verified once by the
UE; all subsequent packets incur only symmetric operations. As $|\mathcal{C}|$ grows, amortized per-packet overhead converges to 52\,B. At $|\mathcal{C}|{=}100$ (${\sim}16$\,s at 160\,ms periodicity), the epoch-opening premium adds only ${\sim}2$\,B per packet on average. This differs fundamentally from approaches such as
EMSS~\cite{perrig2000efficient}, which require a fresh digital signature every $k$-th packet.


\noindent\underline{\textit{Packet-Loss Resilience:}} $\emulsion$ provides inherent loss resilience without FEC. A missed steady-state packet $P_i$ affects at most $d$ messages but causes no cascading failure: the next received packet discloses $K_i$ and authentication resumes immediately. If the epoch-opening packet $P_1$ is missed, the UE waits for the next epoch, which begins with a fresh certified anchor. Certificate-chain schemes have no inherent loss resilience: a single lost fragment out of 6--34 packets causes complete authentication failure, requiring the UE to wait for the next full
broadcast cycle. Reed-Solomon FEC improves resilience but at the cost of additional packets, delay, and decoding complexity. Fig.~\ref{fig:packet_loss_twofig} illustrates these differences. Panel~(a) shows $\emulsion$ (all $|\mathcal{C}|$) maintaining epoch establishment at 90\% even at 10\% packet loss, since only
$P_1$ is critical, while ML-DSA-ML-DSA drops to ${\sim}17\%$ at 5\% loss and FN-DSA-MAYO falls below 50\% at 10\% loss. Panel~(b) shows
steady-state per-message authentication: $\emulsion$ with $|\mathcal{C}|{=}2$, $d{=}1$ authenticates $\sim$80\% of messages at $20\%$ loss, whereas certificate-chain baselines authenticate nothing
until the full chain is successfully reassembled. Moreover, because an \emulsion{} retry costs only one additional epoch ($160$~ms), two consecutive attempts raise the success probability to $1 - p^2 = 99\%$ ($p=10\%$ loss) for a total wait of just $320$~ms. The equivalent two-attempt probability for 
FN-DSA-MAYO ($6$ packets) is only $1-(1-(1{-}p)^6)^2 = 78\%$, 
at a cost of ${\sim}1{,}600$~ms.\vspace{-3mm}

\begin{figure}[!ht]
    \centering
    \includegraphics[width=\columnwidth]{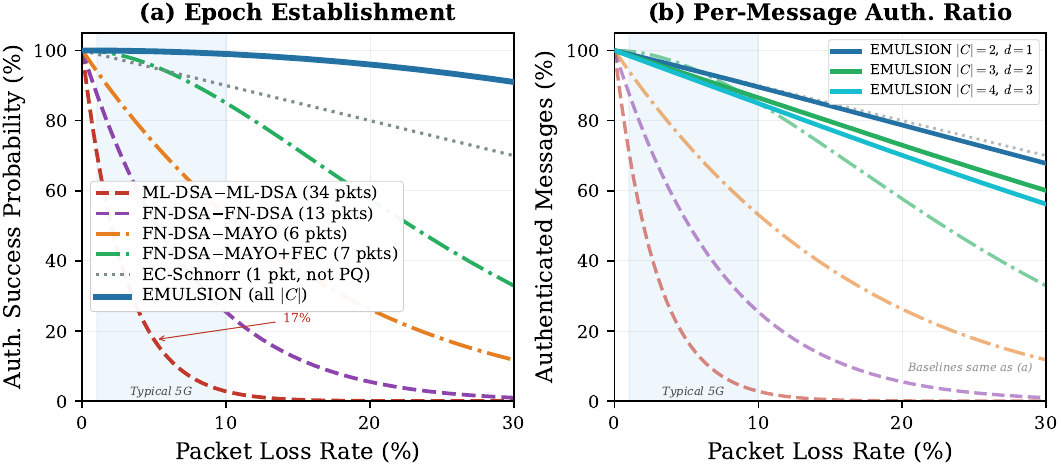}\vspace{-1mm}
    \caption{Packet loss effect on authentication schemes.}\vspace{-3mm}
    \label{fig:packet_loss_twofig}\vspace{-1mm}
\end{figure}

\begin{figure}[!ht]
    \centering
    \includegraphics[width=\columnwidth]{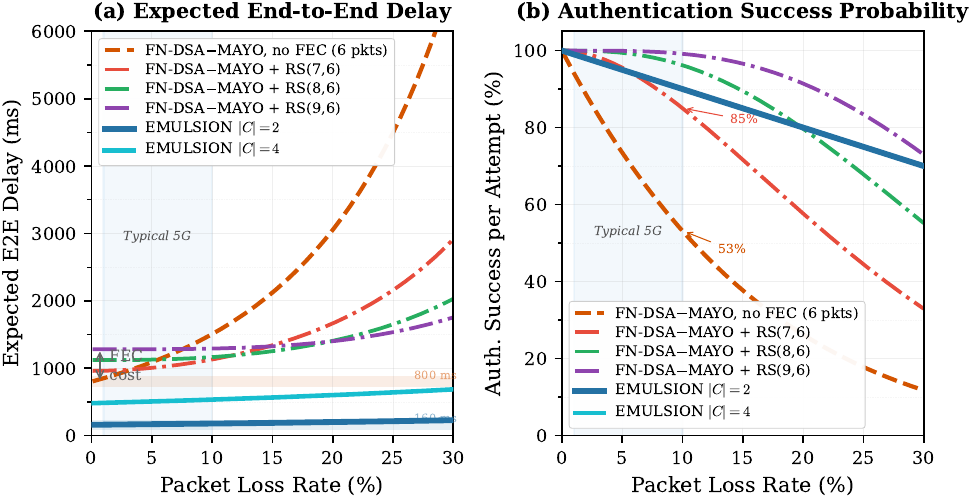}\vspace{-2mm}
    \caption{Tradeoff between end-to-end delay and authentication success based on Reed-Solomon Forward Error Correction (FEC). $\emulsion$ doesn't require FEC and outperforms \textbf{FN-DSA-MAYO} on all FEC settings.}
    \label{fig:fec_tradeoff_twofig}
\end{figure}


Fig.~\ref{fig:fec_tradeoff_twofig} quantifies the delay-resilience tradeoff of Reed-Solomon FEC for FN-DSA-MAYO. Panel~(a) shows that FEC variants RS(7,6) through RS(9,6) reduce expected E2E delay under loss compared to the unprotected 6-packet baseline, but all remain above 800\,ms even at low loss rates, well above $\emulsion$'s 160\,ms at
$|\mathcal{C}|{=}2$. Panel~(b) shows that at 10\% loss, FN-DSA-MAYO without FEC achieves only ${\sim}53\%$ authentication success; while RS(9,6) recovers high success, its E2E delay substantially
exceeds all $\emulsion$ configurations. FEC thus improves FN-DSA-MAYO's loss tolerance but cannot close the gap with $\emulsion$, which requires no error correction overhead.


Overall, $\emulsion$ achieves PQ SIB authentication at 160.07\,ms E2E delay in its minimal configuration ($|\mathcal{C}|{=}2$), which is $5$--$33\times$ faster than PQ certificate-chain alternatives. Its
per-packet BS cost of ${\sim}0.6\,\mu$s (two HMAC operations) is roughly \emph{\textbf{three} orders of magnitude} cheaper than any signature-based scheme, practical for gNBs serving hundreds of cells. Every packet fits withina single SIB transport block, eliminating fragmentation and preserving full 5G protocol stack compatibility. Since $\emulsion$ authenticates over the one-way chain and MAC tags rather than SIB content, it extends naturally to all SIB types without modification, unlike certificate-chain schemes whose fragmentation overhead scales with
the number of protected broadcast channels. Collectively, PQ security, zero fragmentation, HMAC-dominated per-packet cost, inherent loss
resilience, and full SIB generalizability make $\emulsion$ a practical and deployable solution for 5G bootstrapping authentication in the PQ era.


\vspace{-1mm}
\section{Related Work}
\label{sec:RelatedWork} \vspace{-1mm}
We survey prior work on 5G BS authentication and PQ
security for cellular broadcast channels.


\noindent\textbf{PKI-Based BS Authentication.} Early proposals adapted PKI frameworks to authenticate SIB messages. Lee et al.~\cite{lee2009extended} and Zheng~\cite{zheng1996authentication} established certificate-based foundations for mobile network authentication. Hussain et al.~\cite{hussain2019insecure} provided the first systematic attack analysis of 5G bootstrapping and proposed attaching signatures and certificate chains to SIB1/SIB2. Ross et
al.~\cite{ross2024fixing} proposed a ``broadcast-but-verify'' model using a separate \texttt{signingSIB} message to decouple overhead from
SIB1. Gao et al.~\cite{gao2021evaluating} explored delegated signing to reduce per-BS cost, and Wuthier et al.~\cite{wuthier2024fake} combined multi-factor authentication with blockchain-based certificate delivery. 3GPP has explored PKI-based SIB protection in TR~33.809, though SIB1 remains unprotected in the current RRC specification. All PKI-based schemes share a fundamental limitation: certificate chains for AMF
and BS keys routinely exceed the 372-byte SIB1 limit, require fragmentation across multiple packets, and compound verification cost on resource-constrained UEs, while remaining vulnerable to quantum-capable
adversaries.


\looseness-1 \noindent\textbf{Token- and Symmetric-Based Schemes.} BARON~\cite{lotto2023baron} employs symmetric tokens for pre-authentication defense via a Closed Trusted Entity to protect connection initialization and handover in 5G, avoiding asymmetric certificate overhead but leaving SIB contents entirely unprotected: broadcast parameters can be tampered without invalidating any
token. Al-Mekhlafi~\cite{al2024post} extends symmetric security to 5G IoT contexts but similarly does not address SIB broadcast authentication. 
Song et al. ~\cite{songtfs2026} apply TESLA to SIB1 with a Schnorr-like IBS anchor verified once per cell entry, removing per-message signatures but leaving the root of trust on ECDLP and covering SIB1 alone. Substituting a PQ IBS does not help, as none fits a single SIB transport block. 


\noindent\textbf{Identity-Based and Certificate-Free Schemes.} To eliminate certificate chains, Singla et al.~\cite{singla2021look} proposed \textsc{Schnorr-HIBS}, a hierarchical IBS scheme deriving BS and AMF keys from a master key pre-installed in the USIM, achieving
compact overhead within a single SIB1 packet. Ramadan et al.~\cite{ramadan2020identity} explored server-aided IBS to offload UE verification cost. Yu et al.~\cite{yu2024protecting}, Dong et
al.~\cite{dong2025securing}, and Sun and Peng~\cite{sun20255g} further refined two-level HIBSs for LTE/5G. Sengupta and
Lakshminarayanan~\cite{sengupta2024fast} extended this to online-offline threshold IBS for 5G IoT. Darzi et al.~\cite{darzi2025future} presented BORG, a threshold IBS scheme with fail-stop properties that distributes trust across multiple BSs and provides post-mortem forgery detection via a PQ-secured audit log. 
While IBS schemes achieve the best efficiency among conventional approaches, all rely on ECDLP hardness, are broken by quantum-capable adversaries, and are predominantly evaluated for SIB1 only.

\noindent\textbf{Hybrid Post-Quantum Solutions.} A growing body of work combines classical and PQ primitives for 5G/6G security. Vuppala et
al.~\cite{vuppala2023post} and Ko et al.~\cite{ko20255g} proposed hybrid schemes pairing classical key exchange with lattice-based KEMs for
primary authentication. Scalise et al.~\cite{scalise2024applied} analyzed PQ KEMs for 5G/6G core network security, and Attema and de
Kock~\cite{attema2025post} examined PQC deployment challenges in the 5G core. These efforts target unicast session establishment and are structurally incompatible with one-to-many SIB broadcast authentication: KEMs establish shared secrets between two parties, and applying hybrid signatures to SIBs would combine the overhead of two schemes, further
exacerbating fragmentation.


\vspace{-1.5mm}
\section{Conclusion and Future Work} 
\label{sec:conclusion}\vspace{-1.5mm}
\looseness-1 We presented $\emulsion$, a symmetric chained publicly verifiable authentication framework for 5G/6G BS broadcast authentication achieving genuine post-quantum security at symmetric-key efficiency. By exploiting fixed SIB transmission windows, BS-UE time synchronization, and eSIM/USIM credential management, $\emulsion$ harnesses a TESLA-style HMAC chain anchored by a compact MAYO signature applied once per epoch, eliminating certificate chains, avoiding fragmentation, and extending to the full SIB family (MIB and SIB1 through SIB21) at no additional per-broadcast overhead. Evaluated on a real over-the-air 5G testbed, it achieves lower end-to-end delay and communication overhead than ML-DSA with stronger, more durable security guarantees. Future work will extend $\emulsion$ to multi-operator roaming and overshadowing mitigations in the post-quantum setting as 5G transitions toward 6G.




\bibliographystyle{IEEEtran}
\bibliography{References}

\appendices
\section{Security Proofs} \label{sec:securityproof_appendix}

We present the full proofs of Lemmas~\ref{lem:anchor}--\ref{lem:chain} and Theorem~\ref{thm:emulsion}. Each lemma constructs an explicit reduction adversary and analyzes its simulation and success probability.


We use hybrid games $G_0 \to G_1 \to G_2 \to G_3$, where $G_0$ is the real experiment and $\Pr[G_3=1] \leq \mathsf{negl}(\lambda)$. In $G_1$, we abort if $\mathcal{A}$ forges a fresh $\sgn$ certificate passing the check in $\emulsion.\auth$; by Lemma~\ref{lem:anchor}:
$|\Pr[G_0=1] - \Pr[G_1=1]| \leq \mathsf{Adv}^{\mathsf{EUF\text{-}CMA}}_{\sgn, \mathcal{B}_1}$. In $G_2$, we additionally abort if $\mathcal{A}$ forges a valid HMAC tag on a fresh $(m^*, i^*)$ not produced by $\emulsion.\broadcast$; since the TESLA security condition (Definition~\ref{def:chain-security}) prevents access to $K_{i^*-d}$ before its disclosure window expires, by Lemma~\ref{lem:mac}: $|\Pr[G_1=1] - \Pr[G_2=1]| \leq \ell \cdot \mathsf{Adv}^{\mathsf{PRF}}_{F', \mathcal{B}_2}$. In $G_3$, we abort if $\mathcal{A}$ submits a fraudulent $K^*_{\mathsf{disc}}$ passing the chain check in $\emulsion.\auth$; by Lemma~\ref{lem:chain}:
$|\Pr[G_2=1] - \Pr[G_3=1]| \leq \mathsf{Adv}^{\mathsf{OW\text{-}TCR}}_{F, \mathcal{B}_3}$. After $G_3$, $\mathcal{A}$ has no remaining winning strategy: certificate forgeries, MAC forgeries, and chain key forgeries are all excluded, covering every verification path through $\emulsion.\auth$ (Algorithm~\ref{alg:emulsion-verify}), so $\Pr[G_3=1] \leq \mathsf{negl}(\lambda)$. Combining via the triangle inequality
yields the bound in Theorem~\ref{thm:emulsion}.

\noindent\textbf{Lemma~\ref{lem:anchor}} (Anchor Certificate Unforgeability)\textbf{.} \textit{If $\mathcal{A}$ forges a fresh
$\cert^*_{K_0}$ accepted by $\sgn.\verify$, passing the certificate check in $\emulsion.\auth$, there exists a QPT $\mathcal{B}_1$ breaking EUF-CMA of $\sgn$: $\Pr[\mathcal{A}~\text{forges}~\cert_{K_0}] \leq \mathsf{Adv}^{\mathsf{EUF\text{-}CMA}}_{\sgn, \mathcal{B}_1} \lambda)$}. 

\begin{proof}
The reduction exploits the fact that $\mathcal{B}_1$ can simulate the entire $\emulsion$ experiment without knowing $\sk_\amf$, since the chain $\mathcal{C}$ is sampled independently of the signing key. We construct a reduction $\mathcal{B}_1$ that uses $\mathcal{A}$ as a blackbox to break EUF-CMA of $\sgn$ (Definition~\ref{def:sgn}). 

\noindent\textit{\underline{Setup}.} $\mathcal{B}_1$ receives a challenge public key $\pk$ from its EUF-CMA challenger and sets $\pk_\amf := \pk$. It samples the full key chain honestly by picking
$K_\ell \xleftarrow{\$} \{0,1\}^\lambda$ and computing $K_i \leftarrow F(K_{i+1})$ for $i = \ell-1,\ldots,0$, yielding anchor $K_0 = F^\ell(K_\ell)$. It computes $\mathsf{info} = K_0 \| \id_\bs \| T_0 \| T_{\mathsf{int}} \| d \| \ell$ and obtains $\cert_{K_0} \leftarrow \mathcal{O}^{\sgn}_S (\mathsf{info})$ via a single query to its own $\sgn$ signing oracle. $\mathcal{B}_1$ runs $\mathcal{A}$ on input $\pk_\amf$.

\noindent\textit{\underline{Broadcast oracle}.} For each query $\mathcal{O}_B(i, m)$ from $\mathcal{A}$, $\mathcal{B}_1$ computes
$\Pi_i \leftarrow \emulsion.\broadcast(C, \cert_{K_0}, \mathsf{info}, i, m)$ directly using the known chain. No further $\sgn$ signing oracle queries are needed.

\noindent\textit{\underline{Extraction}.} When $\mathcal{A}$ outputs $(m^*, i^*, \Pi^*)$ containing a fresh $\cert^*_{K_0}$ satisfying
$\sgn.\verify(\pk_\amf, \mathsf{info}^*, \cert^*_{K_0}) = 1$ for some $\mathsf{info}^*$ never submitted to $\mathcal{O}_B$, $\mathcal{B}_1$ outputs $(\mathsf{info}^*, \cert^*_{K_0})$ as its $\sgn$ forgery.

\noindent\textit{\underline{Analysis}.} The simulation is perfect: $\mathcal{B}_1$ knows the full chain and answers all broadcast oracle queries honestly. The certificate requires exactly one $\sgn$ signing oracle query, permitted in the EUF-CMA game. Whenever $\mathcal{A}$ forges a fresh certificate, $\mathcal{B}_1$ wins its EUF-CMA game: $\Pr[\mathcal{A}~\text{forges}~\cert_{K_0}] = \Pr[\mathcal{B}_1~\text{wins EUF-CMA}] \leq \mathsf{Adv}^{\mathsf{EUF\text{-}CMA}}_{\sgn, \mathcal{B}_1}(\lambda). \qedhere$
\end{proof}

\noindent\textbf{Lemma~\ref{lem:mac}} (Per-Packet MAC Unforgeability)\textbf{.} \textit{If $\mathcal{A}$ forges a valid HMAC tag on a fresh $(m^*, i^*)$ not broadcast by $\emulsion.\broadcast$, without access to $K_{i^*-d}$, there exists a QPT $\mathcal{B}_2$ breaking PRF security of $F'$: $\Pr[\mathcal{A}~\text{forges MAC}] \leq \ell \cdot \mathsf{Adv}^{\mathsf{PRF}}_{F',\mathcal{B}_2} (\lambda)$, where $\ell$ accounts for guessing the target interval $i^*$ uniformly among the $\ell$ chain epochs}.

\begin{proof}
We construct $\mathcal{B}_2$ that uses $\mathcal{A}$ to break PRF security of $F'$.

\noindent\textit{\underline{Setup}.} $\mathcal{B}_2$ has access to a PRF oracle $\mathcal{O}$ that is either $F'(K^*, \cdot)$ for a hidden random key $K^*$, or a truly random function $R(\cdot)$. It guesses a target interval $i^* \xleftarrow{\$} \{1,\ldots,\ell\}$ uniformly. It samples $K_\ell \xleftarrow{\$} \{0,1\}^\lambda$, computes $K_i \leftarrow F(K_{i+1})$ for all $i$, and obtains $\cert_{K_0}$ via a single $\sgn$ signing oracle query as in Lemma~\ref{lem:anchor}. It runs $\mathcal{A}$ on input $\pk_\amf$.

\noindent\textit{\underline{Broadcast oracle}.} For queries $\mathcal{O}_B(i, m)$ with $i \neq i^*$, $\mathcal{B}_2$ computes
$\tau_i = \hmac(F'(K_i), m \| i \| \id_\bs)$ directly. For $i = i^*$, it uses its PRF oracle: $K'_{i^*} \leftarrow \mathcal{O}(K_{i^*})$, then $\tau_{i^*} = \hmac(K'_{i^*}, m \| i^* \| \id_\bs)$. All other packet fields are computed honestly.

\noindent\textit{\underline{TESLA security condition}.} By Definition~\ref{def:chain-security}, $\mathcal{A}$ does not have access to $K_{i^*-d}$ before the disclosure window expires. Therefore $\mathcal{A}$ cannot compute $K'_{i^*-d} = F'(K_{i^*-d})$ and must forge $\tau^*$ without knowing the MAC key. This is the critical point where the TESLA security condition is invoked in the reduction.

\noindent\textit{\underline{Extraction}.} When $\mathcal{A}$ outputs a valid forgery $\tau^*$ on a fresh $(m^*, i^*)$ at the guessed interval, $\mathcal{B}_2$ distinguishes $\mathcal{O}$ from random: a valid MAC forgery has negligible probability against a truly random function. $\mathcal{B}_2$ outputs ``PRF'' when $\mathcal{A}$ succeeds and ``random'' otherwise.

\noindent\textit{\underline{Analysis}.} The simulation at all $i \neq i^*$ is perfect. At $i^*$, the simulation is indistinguishable from the real game when $\mathcal{O} = F'(K^*, \cdot)$. The guessing step succeeds with probability $1/\ell$: conditioned on $\mathcal{A}$ forging at any interval, it forges at $i^*$ with probability at least $1/\ell$: $\Pr[\mathcal{A}~\text{forges MAC}] \leq \ell \cdot \mathsf{Adv}^{\mathsf{PRF}}_{F',\mathcal{B}_2} (\lambda). \qedhere$
\end{proof}

\noindent\textbf{Lemma~\ref{lem:chain}} (Chain Key Recovery Infeasibility)\textbf{.} \textit{If $\mathcal{A}$ outputs $K^*_{\mathsf{disc}} \neq K_{i^*-d}$ with $F^j(K^*_{\mathsf{disc}}) = K_0$ where $j = i^*-d$, passing the chain check in $\emulsion.\auth$, there exists a QPT $\mathcal{B}_3$ breaking one-wayness or TCR of $F$: $\Pr[\mathcal{A}~\text{forges chain key}] \leq
\mathsf{Adv}^{\mathsf{OW\text{-}TCR}}_{F, \mathcal{B}_3}(\lambda)$}.

\begin{proof}
We construct $\mathcal{B}_3$ that uses $\mathcal{A}$ to break one-wayness or TCR of $F$.

\noindent\textit{\underline{Two cases}.} A fraudulent $K^*_{\mathsf{disc}} \neq K_{i^*-d}$ passing $F^j(K^*_{\mathsf{disc}}) = K_0$ implies one of two things. \textit{Case~1 (Preimage):} $\mathcal{A}$ inverts $F$ along the chain, breaking one-wayness (Definition~\ref{def:chain-security}(ii)). \textit{Case~2 (Collision):} $\mathcal{A}$ finds $K^*_{\mathsf{disc}} \neq K_{i^*-d}$ with $F(K^*_{\mathsf{disc}}) = F(K_{i^*-d})$, breaking TCR (Definition~\ref{def:chain-security}(iii)).

\noindent\textit{\underline{Setup}.} $\mathcal{B}_3$ receives anchor $K_0$ as its OW/TCR challenge. It samples $K_\ell \xleftarrow{\$}
\{0,1\}^\lambda$, computes the full chain $K_i \leftarrow F(K_{i+1})$, and verifies $F^\ell(K_\ell) = K_0$, resampling if necessary. It obtains $\cert_{K_0}$ via a single $\sgn$ signing oracle query and runs $\mathcal{A}$ on input $\pk_\amf$, answering all broadcast oracle queries honestly.

\noindent\textit{\underline{Extraction}.} When $\mathcal{A}$ outputs $K^*_{\mathsf{disc}} \neq K_{i^*-d}$ with $F^j(K^*_{\mathsf{disc}}) = K_0$, $\mathcal{B}_3$ traces the chain from $K^*_{\mathsf{disc}}$ by applying $F$ repeatedly and comparing against the honest chain. The first position $k$ where $F(K^*_k) = F(K_k)$ but $K^*_k \neq K_k$
is a TCR collision (Case~2). If no such position exists, $\mathcal{B}_3$ has found a preimage for $K_0$ under $F^j$ (Case~1). In either case, $\mathcal{B}_3$ outputs the relevant witness.

\noindent\textit{\underline{Analysis}.} The simulation is perfect since $\mathcal{B}_3$ knows the full chain. Every fraudulent $K^*_{\mathsf{disc}}$ passing the chain check in $\emulsion.\auth$ yields a break of one-wayness or TCR: $\Pr[\mathcal{A}~\text{forges chain key}] \leq \mathsf{Adv}^{\mathsf{OW\text{-}TCR}}_{F, \mathcal{B}_3}(\lambda). \qedhere$
\end{proof}

\noindent\textbf{Theorem~\ref{thm:emulsion}} ($\emulsion$ EUF-CMA Security)\textbf{.} \textit{For any QPT adversary $\mathcal{A}$ against $\emulsion$ over a chain of length $\ell$, there exist QPT $\mathcal{B}_1, \mathcal{B}_2, \mathcal{B}_3$ such that: $\mathsf{Adv}^{\mathsf{EUF\text{-}CMA}}_{\emulsion, \mathcal{A}}(\lambda) \leq \mathsf{Adv}^{\mathsf{EUF\text{-}CMA}}_{\sgn,
\mathcal{B}_1}(\lambda) + \ell \cdot \mathsf{Adv}^{\mathsf{PRF}}_{F',\mathcal{B}_2}(\lambda) + \mathsf{Adv}^{\mathsf{OW\text{-}TCR}}_{F, \mathcal{B}_3}(\lambda)
+ \mathsf{negl}(\lambda)$}. 

\begin{proof}
We define hybrid games $G_0, G_1, G_2, G_3$, where $G_0$ is the real
$\mathsf{Exp}^{\mathsf{EUF\text{-}CMA}}_{\emulsion, \mathcal{A}}$ experiment. We bound the gap between consecutive games and show
$\Pr[G_3=1] \leq \mathsf{negl}(\lambda)$.

\noindent\textit{\underline{$G_0$: Real experiment}.} $\mathcal{A}$ interacts with the honest $\emulsion$ challenger. $\Pr[G_0=1] =
\mathsf{Adv}^{\mathsf{EUF\text{-}CMA}}_{\emulsion, \mathcal{A}}(\lambda)$.

\noindent\textit{\underline{$G_0 \to G_1$: Abort on certificate forgery}.} $G_1$ is identical to $G_0$ except the challenger aborts if $\Pi^*$ contains a fresh $\cert^*_{K_0}$ passing $\sgn.\verify$. Any execution where $G_0$ outputs $1$ but $G_1$ outputs $0$ yields a valid $\sgn$ forgery. By Lemma~\ref{lem:anchor}: $|\Pr[G_0=1] - Pr[G_1=1]| \leq \mathsf{Adv}^{\mathsf{EUF\text{-}CMA}}_{\sgn, \mathcal{B}_1}(\lambda)$. 

\noindent\textit{\underline{$G_1 \to G_2$: Abort on MAC forgery}.}
$G_2$ additionally aborts if $\mathcal{A}$'s forgery contains a valid HMAC tag on a fresh $(m^*, i^*)$ not produced by $\emulsion.\broadcast$. In $G_2$, $\mathcal{A}$ must use the honest $\cert_{K_0}$ (certificate forgeries are excluded by $G_1$). The TESLA security condition (Definition~\ref{def:chain-security}) ensures $\mathcal{A}$ cannot access $K_{i^*-d}$ before its disclosure window expires, so any valid MAC forgery breaks PRF security of $F'$. By Lemma~\ref{lem:mac}: $|\Pr[G_1=1] - \Pr[G_2=1]|
\leq \ell \cdot \mathsf{Adv}^{\mathsf{PRF}}_{F',\mathcal{B}_2}(\lambda)$.  

\noindent\textit{\underline{$G_2 \to G_3$: Abort on chain forgery}.} $G_3$ additionally aborts if $\mathcal{A}$ submits a fraudulent $K^*_{\mathsf{disc}}$ passing the chain check $F^j(K^*_{\mathsf{disc}}) \stackrel{?}{=} K_0$ in $\emulsion.\auth$. Any such submission breaks one-wayness or TCR of $F$. By Lemma~\ref{lem:chain}: $|\Pr[G_2=1] - \Pr[G_3=1]| \leq \mathsf{Adv}^{\mathsf{OW\text{-}TCR}}_{F, \mathcal{B}_3}(\lambda)$.

\noindent\textit{\underline{$G_3$: No winning strategy}.} Certificate forgeries, MAC forgeries, and chain key forgeries are all excluded in $G_3$, covering every verification path through $\emulsion.\auth$ (Algorithm~\ref{alg:emulsion-verify}). Therefore
$\Pr[G_3=1] \leq \mathsf{negl}(\lambda)$.

\noindent\textit{\underline{Composition}.} Applying the triangle inequality: $\mathsf{Adv}^{\mathsf{EUF\text{-}CMA}}_{\emulsion,
\mathcal{A}} = \Pr[G_0=1] \leq \Pr[G_1=1] + \mathsf{Adv}^{\mathsf{EUF\text{-}CMA}}_{\sgn, \mathcal{B}_1} \leq \Pr[G_2=1] + \ell \cdot \mathsf{Adv}^{\mathsf{PRF}}_{F', \mathcal{B}_2} + \mathsf{Adv}^{\mathsf{EUF\text{-}CMA}}_{\sgn,
\mathcal{B}_1} \leq \mathsf{negl}(\lambda) + \mathsf{Adv}^{\mathsf{OW\text{-}TCR}}_{F, \mathcal{B}_3} + \ell \cdot \mathsf{Adv}^{\mathsf{PRF}}_{F', \mathcal{B}_2} + \mathsf{Adv}^{\mathsf{EUF\text{-}CMA}}_{\sgn, \mathcal{B}_1}$, which yields the stated bound. $\qedhere$
\end{proof}


\section{POWDER Setup}\label{app:powder}
We collected the mobility dataset using the POWDER wireless testbed by deploying an experiment consisting of a core network and multiple base stations installed on rooftop locations, located at the University of Utah~\cite{powder}. A separate experiment instance was configured with a campus shuttle bus carrying an srsUE device. The shuttle follows a fixed route at regular intervals, enabling controlled and repeatable mobility scenarios across the coverage areas of different base stations. As the shuttle moves away from one base station and approaches another with a stronger signal quality, a handover procedure is triggered, allowing the UE to transition to the better signal provider. During the experiments, all network communications and signaling exchanges were collected as packet capture (PCAP) traces. These traces were subsequently parsed using TShark to generate a structured dataset containing detailed mobility information, including the transmission and reception timestamps of SIB1 messages between the base station and UE, identification of handover events, and related signaling procedures. Additionally, the average speed of the campus shuttle was logged at approximately 29 mph, while the UE-to-base-station distance was estimated using the signal propagation time derived from the collected traces.

\section{MIB Coverage via Deferred Verification.}\label{app:mib}
Although the MIB is carried on the PBCH rather than the DL-SCH, 
its payload is only 3\,bytes (24\,bits)~\cite{RRCSpec}, small 
enough to be included directly in the per-packet HMAC input 
without affecting packet size. Concretely, the BS computes 
$\tau_i \leftarrow \mathsf{HMAC}(K'_i,\; m \| m_{\mathsf{MIB}} 
\| i \| \mathit{ID}_{\mathit{BS}})$, where $m_{\mathsf{MIB}}$ 
denotes the current MIB content. The UE necessarily decodes the 
MIB \emph{before} receiving SIB1 (it is required to locate 
SIB1 on the DL-SCH), so MIB protection is inherently 
retroactive: the UE buffers the decoded MIB and, upon 
successful HMAC verification of the corresponding SIB1 packet, 
checks that the HMAC verifies under the MIB it decoded 
from the PBCH. Since $m_{\mathsf{MIB}}$ is an input to 
the HMAC, successful verification confirms that the BS 
committed to the same MIB content the UE received; a 
verification failure indicates a spoofed MIB and causes 
the UE to abort the connection attempt, preventing any 
further protocol progression (RACH, RRC, NAS) on a 
fraudulent cell. This deferred-verification model mirrors 
TESLA's own design, where packets are buffered pending key 
disclosure, and adds zero communication overhead, since the 
3-byte MIB content is absorbed into the existing HMAC 
computation at negligible cost.

\section{Relay and Downgrade Attacks}\label{app:relaydowngrade}
\smallskip\noindent\textbf{Relay Attacks.} A relay forwarding genuine, unmodified SIBs may pass our verification, since Algorithm~\ref{alg:emulsion-verify} binds the tag to $\mathit{ID}_{\mathit{BS}}$ but not to physical-layer observables (PCI, ARFCN, RF characteristics).  However, because the replayed content is authentic, the adversary cannot manipulate cell-selection parameters, reselection priorities, or public-warning payloads without invalidating the HMAC tag; only attacks that require mere UE presence on the rogue cell survive.  Binding authentication to a physical-layer anchor is complementary and left to future work.

\smallskip\noindent\textbf{Downgrade Attacks.} During transition or partial rollout, an FBS can broadcast a legacy unauthenticated SIB1, indistinguishable from a non-upgraded cell. We can propose two alternatives here. (i) A UE in strict mode rejects unauthenticated SIBs, preventing downgrade at the cost of denying service on non-upgraded cells; (ii) in permissive mode it prefers authenticated cells but falls back to unauthenticated ones, preserving coverage but allowing bypass. EMULSION supports both via a per-PLMN policy flag provisioned through the eSIM channel: strict enforcement applies only to PLMNs the operator has enrolled, so an FBS cannot downgrade enrolled networks by omitting authentication material, while unenrolled networks remain accessible.

\end{document}